
\NeedsTeXFormat{LaTeX2e}

\documentclass{new_tlp}
\usepackage{mathptmx}

\usepackage{listings}
\lstset{breaklines=true}

\usepackage{soul}
\usepackage{url}
\usepackage[utf8]{inputenc}
\usepackage[small]{caption}
\usepackage{graphicx}
\usepackage{amsmath}
\usepackage{amssymb}
\usepackage{booktabs}
\usepackage{algorithm}
\usepackage{algorithmic}
\usepackage{mdframed}
\usepackage{xspace}
\usepackage{enumitem}
\urlstyle{same}
\usepackage{multirow}

\DeclareMathAlphabet{\mathcal}{OMS}{cmsy}{m}{n}

\newcommand{\rel}[1]{\mathsf{#1}}

\def\HasCol{\rel{HasC}}
\def\Vrtx{\rel{V}}
\def\Col{\rel{Col}}
\def\Edgt{\rel{E_t}}
\def\Edgs{\rel{E_s}}

\def\Colt{\rel{Col_t}}
\def\Emp{\rel{Emp}}
\def\EmpC{\rel{KnownC}}
\def\EmpNew{\rel{EmpC}}
\def\SameC{\rel{SameC}}
\def\Address{\rel{Addr}}
\def\WorkFromHome{\rel{WorkFromHome}}
\def\Ord{\rel{Ord}}
\def\AllOrd{\rel{AllOrd}}
\def\Paid{\rel{Paid}}

\newcommand{\Var}{\mathsf{Var}}
\newcommand{\Const}{\mathsf{Const}}
\newcommand{\Null} {\mathsf{Null}}
\newcommand{\Rel} {\mathsf{Rel}}
\newcommand{\Sch}{\mathsf{S}}
\newcommand{\Src}{\mathsf{S}}
\newcommand{\Trg}{\mathsf{T}}
\def\P{\mathcal{P}}

\def\STTGD{\Sigma_{st}}
\def\Set{\mathcal{S}}
\def\TTGD{\Sigma_t}
\newcommand{\nonull}[1]{#1_{\downarrow}}
\def\des{\langle \Src, \Trg, \STTGD, \TTGD \rangle}
\def\exchoice{ex-choice}
\def\choice{\textit{choice}}
\newcommand{\depg}[1]{\mathsf{dg}(#1)}

\newcommand{\pos}[1]{\mathsf{pos}(#1)}
\newcommand{\norm}[1]{\mathsf{norm}(#1)}
\def\v1{\hspace{-2mm}}




\def\adom{\mathsf{dom}} 

\newcommand{\restr}[2]{#1_{\downarrow #2}}
\def\const{\mathsf{\mathsf{U}}}
\newcommand{\base}[1]{\mathsf{base}(#1)}
\def\var{\mathsf{var}}
\def\exvar{\mathsf{exvar}}
\def\fr{\mathsf{fr}}
\def\nul{{\perp}} 

\newcommand{\ground}[1]{\mathsf{ground}(#1)}
\newcommand{\MM}[1]{\mathsf{MM}(#1)}
\newcommand{\SM}[1]{\mathsf{SM}(#1)}

\def\x{{\bold x}}
\def\y{{\bold y}}
\def\z{{\bold z}}


\def\NP{\text{\rm NP}}
\def\co{\text{\rm co}}

\def\hard{\text{\rm -hard}}
\def\complete{\text{\rm -complete}}
\def\PTIME{\text{\rm PTIME}}

\newcommand{\pw}[1]{\mathsf{pw}(#1)}


\def\ar{\mathit{ar}} 
 %


\newcommand{\conditional}[1]{\dot{#1}}
\def\c-{\, \conditional{-}\, } %
 %

\def\john{\mathsf{john}} %
 %
 %



\def\true{\mathsf{true}} %
\def\false{\mathsf{false}} %

\newcommand{\Cond}[2]{\Phi^{#2}_{#1}}


 %
 %
 %







\def\={\!\mbox{\tiny =}}
\def\<{\langle}
\def\>{\rangle}

\def\ar{ar} 

\newcommand{\Gchase}[3]{\stackrel{#1,#2,#3}{\longrightarrow}}

\def\lprule{rule\xspace}
\def\lprules{rules\xspace}
\def\lprulesep{\text{ :- }}
\newcommand{\LP}[1]{\mathsf{LP}(#1)}
\newcommand{\ED}[1]{\mathsf{ED}(#1)}


\def\body{\mathsf{body}} 
\def\bodyp{\mathsf{body}^+} 
\def\bodyn{\mathsf{body}^-} 
\def\head{\mathsf{head}} 


\newcommand{\cert}[3]{\mathsf{cert}_{#3}(#1,#2)}
\newcommand{\scert}[3]{\mathsf{scert}_{#3}(#1,#2)}
\newcommand{\cans}[3]{\mathsf{cans}_{#3}(#1,#2)}
\newcommand{\ccert}[2]{\mathsf{con\text{-}cert}(#1,#2)}
\newcommand{\scertprob}[1]{\mathsf{SCERT}(#1)}
\newcommand{\supexprob}[1]{\mathsf{EXISTS\text{-}SSOL}(#1)}
\newcommand{\cansprob}[1]{\mathsf{CANS}(#1)}

\newcommand{\sol}[1]{\mathsf{sol}(#1)}
\newcommand{\ssol}[1]{\mathsf{ssol}(#1)}

\newcommand{\catom}[1]{\langle #1 \rangle}



\newcommand{\pushright}[1]{\ifmeasuring@#1\else\omit\hfill$\displaystyle#1$\fi\ignorespaces}


\newcommand{\cinst}[1]{\mathcal{#1}}





\def\bt{\mathbf{t}} 
\def\bu{\mathbf{u}}
\def\bx{\mathbf{x}}
\def\by{\mathbf{y}}
\def\bz{\mathbf{z}}
\def\bw{\mathbf{w}}



\newcommand{\EDB}{\mathit{ED}}

\newcommand{\extdb}{extensional database\xspace}

\newcommand{\program}{program\xspace}

\newcommand{\PP}{\P_{\EDB}}

\hyphenation{ans-we-ring}

\newtheorem{theorem}{Theorem}
\newtheorem{lemma}[theorem]{Lemma}
\newtheorem{example}{Example}
\newtheorem{definition}{Definition}
\newtheorem{corollary}{Corollary}

\title[Theory and Practice of Logic Programming]
{Querying Data Exchange Settings Beyond Positive Queries\footnote{Competing interests: The author(s) declare none}}

\author[Marco Calautti, Sergio Greco, Cristian Molinaro and Irina Trubitsyna]{
	MARCO CALAUTTI$^1$, SERGIO GRECO$^2$,  \\
	{\normalsize \em CRISTIAN MOLINARO$^2$ and IRINA TRUBITSYNA$^2$}\\
	$^1$DI, University of Milan, Italy\\
	$^2$DIMES, University of Calabria, Italy\\
	\email{marco.calautti@unimi.it, \{greco,cmolinaro,trubitsyna\}@dimes.unical.it}
}

\jdate{March 2003}
\pubyear{2003}
\pagerange{\pageref{firstpage}--\pageref{lastpage}}
\doi{S1471068401001193}

\begin{document}
	
	\label{firstpage}
	
	\maketitle
	
	\begin{abstract}
	Data exchange, the problem of transferring data from a source schema to a target schema, has been studied for several years. 
	The semantics of answering positive queries over the target schema has been defined in early work, but little attention has been paid to more general queries. A few proposals of semantics for more general queries exist but they either do not properly extend the standard semantics under positive queries, giving rise to counterintuitive answers, or they make query answering undecidable even for the most important data exchange settings, e.g., with weakly-acyclic dependencies.
	
	The goal of this paper is to provide a new semantics for data exchange that is able to deal with general queries. At the same time, we want our semantics to coincide with the classical one when focusing on positive queries, and to not trade-off too much in terms of complexity of query answering. We show that query answering is undecidable in general under the new semantics, but it is $\co\NP\complete$ when the dependencies are weakly-acyclic.
	Moreover, in the latter case, we show that exact answers under our semantics can be computed by means of logic programs with choice, thus exploiting existing efficient systems. For more efficient computations, we also show that our semantics allows for the construction of a representative target instance, similar in spirit to a universal solution, that can be exploited for computing approximate answers in polynomial time.
	%
	\end{abstract}
	
	\begin{keywords}
		Data Exchange, Semantics, Closed Word Assumption, Approximations
	\end{keywords}
	
	\tableofcontents

\section{Introduction}
    Data exchange is the problem of transferring data from a source schema to a target schema, where the transfer process is usually described via so-called schema mappings: a set of logical assertions specifying how the data should be moved and restructured. 
    Furthermore, the target schema may have its own constraints to be satisfied. 
    Schema mappings and target constraints are usually encoded via standard database dependencies: \emph{tuple-generating dependencies} (TGDs) and \emph{equality-generating dependencies} (EGDs).
	Thus, given an instance $I$ over the source schema $\Src$, the goal is to materialize an instance $J$ over the target schema $\Trg$, called \emph{solution}, in such a way that $I$ and $J$ together satisfy the dependencies.
	
	Since multiple solutions might exist, a precise semantics for answering queries is needed. By now, the \emph{certain answers} semantics is the most accepted one. The certain answers to a query is the set of all tuples that are answers to the query in every solution of the data exchange setting~\cite{FaginKMP05}. Although it has been formally shown that for positive queries (e.g., conjunctive queries) the notion of solution of~\cite{FaginKMP05} is the right one to use, for more general queries such solutions become inappropriate, as they easily lead to counterintuitive  results. 
  	 
  	\begin{example}\label{ex:wrong-answer}
  		Consider a data exchange setting denoted by $\Set = \des$, where $\Src$ is the source schema, storing product orders  in a binary relation $\Ord$, with the first argument being the id of an order, and the second argument specifying whether the order has been paid. Moreover, $\Trg$ is the target schema having unary relations $\AllOrd$ and $\Paid$, storing all orders and the paid orders, respectively. The schema mapping is described by the following source-to-target TGDs $\STTGD$:
  		\[
  		\begin{array}{llllll}
  			\rho_1 = &  \forall x,y & \Ord(x,y) \rightarrow \AllOrd(x),\qquad \rho_2 = &  \forall x & \Ord(x,\mathsf{yes}) \rightarrow \Paid(x). 
  		\end{array}
  		\]
  		In this example, we assume that the set of target dependencies $\TTGD$ is empty.
  		The above schema mapping states that all orders in the source schema must be copied to the $\AllOrd$ relation, and all the paid orders must be copied to the $\Paid$ relation.
  		Assume the source instance is as follows:
  		$$I=\{\Ord(1,\mathsf{yes}),\Ord(2,\mathsf{no})\},$$
  		and assume we want to pose the query $Q$ over the target schema asking for all the unpaid orders. This can be written as the following FO query:
  		$$ Q(x) = \AllOrd(x) \wedge \neg \Paid(x).$$
  		One would expect the answer to be $\{2\}$, since the schema mapping above is simply copying $I$ to the target schema, and hence $J = \{\AllOrd(1),\AllOrd(2),\Paid(1)\}$ should be the only candidate solution. However, under the classical notion of solution of~\cite{FaginKMP05},  also the instance $J' = \{\AllOrd(1),\AllOrd(2),\Paid(1),\Paid(2)\}$ is a solution (since $I \cup J'$ satisfies the TGDs), and every order in $J'$ is paid. Hence, the certain answers to $Q$, which are computed as the intersection of the answers over all solutions, are empty.
  	\end{example}
  	The issue above arises because the classical notion of solution is too permissive, in that it allows the existence of facts in a solution that have no support from the source (e.g., $\Paid(2)$ in the solution $J'$ of Example~\ref{ex:wrong-answer} above).
  	
  	Some efforts exist in the literature that provide alternative notions of solutions for which certain answers to general queries become more meaningful. Prime examples are the works of~\cite{HernichLibkin2011} and~\cite{Hernich2011}. In both approaches,
  	the certain answers in the example above are $\{2\}$.
  	However, the works above have their own drawbacks too. In~\cite{HernichLibkin2011}, so-called \emph{CWA-solutions} are introduced, which are a subset of the classical solutions with some restrictions. 
  	However, these restrictions are so severe that certain answers over such solutions fail to capture certain answers over classical solutions, when focusing on positive queries. 
  	Moreover, even when focusing on more general queries, answers can still be counterintuitive, as shown in the following example.
  	
  	\begin{example}\label{ex:employee}
  		Consider the data exchange setting $\Set = \des$, where $\Src$ stores employees of a company in the unary relation $\Emp$. For \emph{some} employees, the city they live in is known, and it is stored in the binary relation $\EmpC$. 
  		The target schema $\Trg$ contains the binary relation $\EmpNew$, storing employees and the cities they live in, and the binary relation $\SameC$, storing pairs of employees living in the same city.
  		The sets $\STTGD = \{\rho_1,\rho_2\}$ and $\TTGD = \{\rho_3,\eta\}$ are as follows (for simplicity, we omit the universal quantifiers):
  		\[
  		\begin{array}{l}
  		\rho_1 = \Emp(x) \rightarrow \exists z\, \EmpNew(x,z), \\
  		\rho_2 = \EmpC(x,y) \rightarrow  \EmpNew(x,y), \\
		\rho_3 = \EmpNew(x,y),\ \EmpNew(x',y) \rightarrow \SameC(x,x'), \\
  		\eta\  = \EmpNew(x,y),\ \EmpNew(x,z) \rightarrow y=z.\\
  		\end{array}
  		\]
  		The above setting copies employees from the source to the target. The TGD $\rho_1$ states that every copied employee $x$ must have some city $z$ associated, whereas $\rho_2$ states that when the city $y$ of an employee $x$ is known, this should be copied as well.
  		Moreover, the target schema requires that employees living in the same city should be stored in relation $\SameC$ ($\rho_3$), and each employee must live in only one city ($\eta$).
  		Assume the source instance is 
  		$$I = \{\Emp(\rel{john}), \Emp(\rel{mary}),\EmpC(\rel{\john},\rel{miami})\},$$
  		and consider the query $Q$ that asks for all pairs of employees living in different cities. This can be written as:
$$
  		Q(x,x') = \exists y \exists y'\, \EmpNew(x,y) \wedge \EmpNew(x',y') 
  		\wedge \neg \SameC(x,x').
$$
  		One would expect that the set of certain answers to $Q$ is empty, since it is not certain that $\rel{john}$ and $\rel{mary}$ live in different cities. However, no CWA-solution admits $\rel{mary}$ and $\rel{john}$ to live in the same city, and thus $(\rel{\john},\rel{mary})$ is a certain answer under the CWA-solution-based semantics.
  	\end{example}
  
  	The approach of~\cite{Hernich2011}, where the notion of GCWA$^*$-solution is presented, seems to be the most promising one. 
For positive queries, certain answers w.r.t.\ GCWA$^*$-solutions coincide with certain answers w.r.t.\ classical solutions. Moreover, GCWA$^*$-solutions solve some other limitations of CWA-solutions, like the one discussed in Example~\ref{ex:employee}. However, the practical applicability of this semantics is somehow limited, since the (rather involved) construction of GCWA$^*$-solutions easily makes certain query answering undecidable, even for very simple settings with only two source-to-target TGDs, and no target dependencies.
  	
  	Other semantics have been proposed in~\cite{LibkinSirangelo2011}, but they are only defined for data exchange settings without target dependencies. Hence, one needs to assume that the target schema has no dependencies at all.
  	
  	As a final remark, in a data exchange setting, it might be the case that the source is not always available, and thus the materialization of a single solution, over which certain answers can be computed, is a desirable requirement. This is especially true when using weakly-acyclic dependencies, which form the standard language for data exchange~\cite{FaginKMP05}. However, none of the semantics above allow for the materialization of such a special solution, for weakly-acyclic settings.
  	
 	\smallskip
	In this paper, we propose a new notion of data exchange solution, dubbed \emph{supported solution}, which allows us to deal with general queries, but at the same time is suitable for practical applications.
  	That is, we show that certain answers under supported solutions naturally generalize certain answers under classical solutions, when focusing on positive queries. Moreover, such solutions do not make any assumption on how values associated to existential variables compare to other values, hence solving issues like the ones of Example~\ref{ex:employee}.
  	
  	As expected, there is a price to pay to get meaningful answers over general queries: we show that certain answering is undecidable for general settings, but it becomes $\co\NP\complete$ when we focus on weakly-acyclic dependencies.
  	
  	Moreover, we show that exact answers under supported solutions for \emph{general} queries in weakly-acyclic settings can be computed via an encoding into logic programming with the well-known choice construct, allowing one to use efficient off-the-shelf reasoning systems.
  	
  	Finally, we also show that if one is not willing to incur the high complexity of exact certain answers for weakly-acyclic settings, then it is possible to construct a target instance in \emph{polynomial time}, which is similar in spirit to a universal solution of~\cite{FaginKMP05}, that can be exploited for computing exact answers, for positive queries, and \emph{approximate answers}, for general FO queries, in polynomial time. The latter is achieved by adapting existing approximation algorithms originally defined for querying incomplete databases.
  	%
  	
\section{Preliminaries}\label{sec:preliminaries}


\textbf{Basics.}
We consider pairwise disjoint countably infinite sets $\Const$, $\Var$, and $\Null$ of \emph{constants}, \emph{variables}, and \emph{labeled nulls}, respectively.
Nulls are denoted by the symbol $\nul$, possibly subscripted. 
A \emph{term} is a constant, a variable, or a null.
We additionally assume the existence of a countably infinite set $\Rel$ of \emph{relations}, disjoint from the previous ones.
A relation $R$ has an \emph{arity}, denoted $\ar(R)$, which is a non-negative integer.  
We also use $R/n$ to say that $R$ is a relation of arity $n$.
A \emph{schema} is a set of relations. A \emph{position} is an expression of the form $R[i]$, where $R$ is a relation and $i \in \{1,\ldots,\ar(R)\}$.

An \emph{atom} $\alpha$ (over a schema $\Sch$) is of the form $R(\bt)$, where $R$ is an $n$-ary relation (of $\Sch$) and $\bt$ is a tuple of terms of length $n$.
We use $\bt[i]$ to denote the $i$-th term in $\bt$, for $i \in \{1,\ldots,n\}$.
An atom without variables is a \emph{fact}.
An \emph{instance} $I$ (over a schema $\Sch$) is a finite set of facts (over $\Sch$).
A \emph{database} $D$ is an instance without nulls.
For a set of atoms $A$, $\adom(A)$ is the set of all terms in $A$, whereas $\var(A)$ is the set $\adom(A) \cap \Var$.
A \emph{homomorphism} from a set of atoms $A$ to a set of atoms $B$ is a function $h : \adom(A) \rightarrow \adom(B)$ that is the identity on $\Const$, and such that for each atom $R(\bt) = R(t_1,\ldots,t_n) \in A$, $R(h(\bt))=R(h(t_1),\ldots,h(t_n)) \in B$.

\smallskip
\noindent
\textbf{Dependencies.}
A \emph{tuple-generating dependency} (TGD) $\rho$ (over a schema $\Sch$) is a first-order formula of the form $\forall \x,\y\, \varphi(\x,\y) \rightarrow \exists \z\, \psi(\y,\z)$, where $\x,\y,\z$ are disjoint tuples of variables, and $\varphi$ and $\psi$ are conjunctions of atoms  (over $\Sch$) without nulls, and over the variables in $\x,\y$ and $\y,\z$, respectively. The \emph{body} of $\rho$, denoted $\body(\rho)$, is $\varphi(\x,\y)$, whereas the \emph{head} of $\rho$, denoted $\head(\rho)$, is $\psi(\y,\z)$. We use $\exvar(\rho)$ to denote the tuple $\z$ and $\fr(\rho)$ to denote the tuple $\y$, also called the \emph{frontier of $\rho$}.
An \emph{equality-generating dependency} (EGD) $\eta$ (over a schema $\Sch$) is a first-order formula of the form $\forall \x\, \varphi(\x) \rightarrow x=y$, where $\x$ is a tuple of variables, $\varphi$ a conjunction of atoms (over $\Sch$) without nulls, and over $\x$, and $x,y \in \x$. The \emph{body of $\eta$}, denoted $\body(\eta)$, is $\varphi(\x)$, and the \emph{head of $\eta$}, denoted $\head(\eta)$, is the equality $x=y$.
For clarity, we will omit the universal quantifiers in front of dependencies and replace the conjunction symbol $\wedge$ with a comma. Moreover, with a slight abuse of notation, we sometimes treat a conjunction of atoms as the \emph{set} of its atoms.
%
Consider an instance $I$. We say that $I$ \emph{satisfies a TGD} $\rho$ if for every homomorphism $h$ from $\body(\rho)$ to $I$, there is an extension $h'$ of $h$ such that $h'$ is a homomorphism from $\head(\rho)$ to $I$. We say that $I$ \emph{satisfies an EGD} $\eta = \varphi(\x) \rightarrow x=y$, if for every homomorphism $h$ from $\body(\eta)$ to $I$, $h(x)=h(y)$.
$I$ \emph{satisfies} a set of TGDs and EGDs $\Sigma$ if $I$ satisfies every TGD and EGD in $\Sigma$.

\smallskip
\noindent
\textbf{Queries.} A \emph{query} $Q(\bx)$, with free variables $\bx$, is a first-order (FO) formula $\varphi(\x)$ with free variables $\x$. The \emph{arity} of $Q(\bx)$, denoted $\ar(Q)$, is the number $|\bx|$. The \emph{output} of $Q(\bx)$ over an instance $I$, denoted $Q(I)$, is the set $\{\bt \in \adom(I)^{|\bx|} \mid I \models \varphi(\bt)\}$, where $\models$ is FO entailment.\footnote{We assume active domain semantics, i.e., quantifiers range over the terms in the given instance.}
A query is \emph{Boolean} if it has arity 0, in which case its output over an instance is either the empty set or the empty tuple $\langle \rangle$.
A \emph{conjunctive query} (CQ) is a query of the form $Q(\bx) = \exists \by\, \varphi(\bx,\by)$, where $\varphi(\bx,\by)$ is a conjunction of atoms over $\bx$ and $\by$. A \emph{union of conjunctive queries} (UCQ) is a query of the form $Q(\bx) = \bigvee^n_{i=1} Q_i(\bx)$, where each  $Q_i(\x)$ is a CQ. We refer to UCQs also as \emph{positive queries}.

\smallskip
\noindent
\textbf{Data Exchange Settings.}
A \emph{data exchange setting} (or simply setting) is a tuple of the form $\Set = \langle \Src, \Trg, \STTGD, \TTGD \rangle$, where $\Src,\Trg$ are disjoint schemas, called \emph{source} and \emph{target} schema, respectively; $\STTGD$ is a finite set of TGDs, called the \emph{source-to-target TGDs} of $\Set$, such that for each TGD $\rho \in \STTGD$, $\body(\rho)$ is over $\Src$ and $\head(\rho)$ is over $\Trg$; $\TTGD$ is a finite set of TGDs and EGDs over $\Trg$, called the \emph{target dependencies} of $\Set$. We say $\Set$ is \emph{TGD-only} if $\TTGD$ contains only TGDs.

A \emph{source (resp., target) instance of} $\Set$ is an instance $I$ over $\Src$ (resp., $\Trg$). We assume that source instances are databases, i.e., they do not contain nulls.
Given a source instance $I$ of $\Set$, a \emph{solution of $I$ w.r.t.\ $\Set$} is a target instance $J$ of $\Set$ such that $I \cup J$ satisfies $\STTGD$ and $J$ satisfies $\TTGD$~\cite{FaginKMP05}. We use $\sol{I,\Set}$ to denote the set of all solutions of $I$ w.r.t.\ $\Set$.

Given a data exchange setting $\Set = \langle \Src, \Trg, \STTGD, \TTGD \rangle$, a source instance $I$ of $\Set$ and a query $Q$ over $\Trg$,  the \emph{certain answers to $Q$ over $I$ w.r.t.\ $\Set$} is the set
$\cert{I}{Q}{\Set} = \bigcap_{J \in \sol{I,\Set}} Q(J)$.

To distinguish between the notion of solution (resp., certain answers) above and the one defined in Section~\ref{sec:semantics}, we will refer to the former  as \emph{classical}.

A \emph{universal solution} of $I$ w.r.t.\ $\Set$ is a solution $J \in \sol{I,\Set}$ such that, for every $J' \in \sol{I,\Set}$, there is a homomorphism from $J$ to $J'$~\cite{FaginKMP05}. Letting $\nonull{Q(J)} = Q(J) \cap \Const^{|\x|}$, for any instance $J$ and query $Q(\x)$, the following result from~\cite{FaginKMP05} is well-known:

\begin{theorem}\label{thm:universal-fagin}
	Consider a data exchange setting $\Set$, a source instance $I$ of $\Set$ and a positive query $Q$. If $J$ is a universal solution of $I$ w.r.t.\ $\Set$, then
	$ \cert{I}{Q}{\Set} = \nonull{Q(J)}$.
\end{theorem}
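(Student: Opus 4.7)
The plan is to prove the two containments $\nonull{Q(J)} \subseteq \cert{I}{Q}{\Set}$ and $\cert{I}{Q}{\Set} \subseteq \nonull{Q(J)}$ separately, as is standard for this kind of preservation result.

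For the forward inclusion $\nonull{Q(J)} \subseteq \cert{I}{Q}{\Set}$, I would first isolate the following auxiliary fact: if $h$ is a homomorphism between two instances $J_1$ and $J_2$, and $Q'(\bx)$ is a UCQ, then for every tuple $\bt$ of terms in $\adom(J_1)$ with $\bt \in Q'(J_1)$, we have $h(\bt) \in Q'(J_2)$. This is the classical preservation-under-homomorphism property of positive queries, proved by induction on the structure of each disjunct (existential quantifiers are witnessed through $h$, and conjunction is preserved since $h$ maps each query atom to an atom in $J_2$). Given this, if $\bt \in \nonull{Q(J)}$ then $\bt \in Q(J)$ and $\bt$ consists of constants only. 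Fix any solution $J' \in \sol{I,\Set}$; by universality of $J$, there is a homomorphism $h: J \to J'$, and by the auxiliary fact $h(\bt) \in Q(J')$. Since homomorphisms are the identity on $\Const$ and $\bt$ consists entirely of constants, $h(\bt) = \bt$, hence $\bt \in Q(J')$. As $J'$ was arbitrary, $\bt \in \cert{I}{Q}{\Set}$.

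For the reverse inclusion $\cert{I}{Q}{\Set} \subseteq \nonull{Q(J)}$, let $\bt \in \cert{I}{Q}{\Set}$. Since $J$ itself is a solution, we directly obtain $\bt \in Q(J)$, so it remains to argue that every component of $\bt$ is a constant. To that end, I would invoke a renaming argument: given any solution $J'' \in \sol{I,\Set}$ and any injective renaming $\pi$ of the nulls of $J''$ to fresh nulls outside $\adom(I) \cup \adom(J'')$, the resulting instance $\pi(J'')$ is again a solution, because both the source-to-target TGDs (whose heads only constrain the structure of $J''$ up to isomorphism over $\adom(I)$) and the target dependencies (which make no reference to specific nulls) are invariant under such renamings. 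Now if some component of $\bt$ were a null $\nul$, we could choose a renaming $\pi$ with $\pi(\nul)$ different from $\nul$ applied to $J$ itself; in the resulting solution $\pi(J)$, the value $\nul$ no longer belongs to the active domain, so $\bt \notin Q(\pi(J))$ under the active domain semantics, contradicting $\bt \in \cert{I}{Q}{\Set}$. Hence $\bt$ consists only of constants and therefore $\bt \in \nonull{Q(J)}$.

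The main obstacle is pinning down the preservation-under-homomorphism property at the right level of generality for UCQs (ensuring that existential witnesses are transferred correctly under $h$), together with spelling out precisely why the null-renaming construction yields another solution; both are folklore, but they are what makes the equality $\cert{I}{Q}{\Set} = \nonull{Q(J)}$ go through. Everything else reduces to unfolding the definitions of universal solution, certain answers, and the notation $\nonull{(\cdot)}$.
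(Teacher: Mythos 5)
Your proof is correct and is essentially the classical argument of Fagin et al.; the paper itself states Theorem~\ref{thm:universal-fagin} without proof, simply citing \cite{FaginKMP05}. Both directions are sound: the forward inclusion via preservation of UCQs under the homomorphism guaranteed by universality (with $h(\bt)=\bt$ because $\bt$ is null-free), and the reverse inclusion via the observation that solutions are closed under injective renaming of nulls to fresh nulls, so under active-domain semantics no tuple containing a null can belong to $Q(J')$ for every solution $J'$.
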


\section{Semantics for General Queries}\label{sec:semantics}
The goal of this section is to introduce a new notion of solution for data exchange that we call \emph{supported}. As already discussed, the main issue we want to solve w.r.t.\ classical solutions is that such solutions are too permissive, i.e., they allow for the presence of facts that are not a certain consequence of the source instance and the dependencies.
Consider again Example~\ref{ex:wrong-answer}. The (classical) solution $J'$ in Example~\ref{ex:wrong-answer} is not supported, since from the source instance $I$ and the dependencies, we cannot conclude that the fact $\Paid(2)$ should occur in the target. On the other hand, the solution $J = \{\AllOrd(1),\AllOrd(2),\Paid(1)\}$ is supported: it contains precisely the facts supported by $I$ and the dependencies, and no more than that.
Similarly, considering Example~\ref{ex:employee}, the instance $J = \{\EmpNew(\rel{john},\rel{miami})$, $\EmpNew(\rel{mary},\rel{chicago})$, $\SameC(\rel{john},\rel{mary})\}$ is a solution, but it is not supported, since from the source and the dependencies we cannot certainly conclude that $\rel{john}$ and $\rel{mary}$ live in the same city. We now formalize the above intuitions.

Consider a TGD $\rho$ and a mapping $h$ from the variables of $\rho$ to $\Const$. We say that a TGD $\rho'$ is a \emph{ground version of $\rho$} (via $h$) if $\rho' = h(\body(\rho)) \rightarrow h(\head(\rho))$.

\begin{definition}[\exchoice]\label{def:exchoice}
	An \emph{\exchoice} is a function $\gamma$, that given as input a TGD $\rho = \varphi(\x,\y) \rightarrow \exists \z\, \psi(\y,\z)$ and a tuple $\bt \in \Const^{|\y|}$, returns a set $\gamma(\rho,\bt)$ of pairs  of the form $(z,c)$, one for each existential variable $z \in \exvar(\rho)$, where $c$ is a constant of $\Const$.
\end{definition}
 Note that if $\rho$ does not contain existential variables, $\gamma(\rho,\bt)$ is the empty set.

Intuitively, given a TGD, an \exchoice\ specifies a valuation for the existential variables of the TGD which depends on a given valuation of its frontier variables.

\smallskip
We now define when a ground version of a TGD indeed assigns existential variables according to an \exchoice .

\begin{definition}[Coherence]
	Consider a TGD $\rho = \varphi(\x,\y) \rightarrow \exists \z\, \psi(\y,\z)$, an \exchoice\ $\gamma$ and a ground version $\rho'$ of $\rho$ via some mapping $h$. We say that $\rho'$ is \emph{coherent with $\gamma$} if for each existential variable $z \in \exvar(\rho)$, $(z,h(z)) \in \gamma(\rho,h(\y))$. 
\end{definition}
For a set $\Sigma$ of TGDs and EGDs, and an \exchoice\ $\gamma$, $\Sigma^\gamma$ denotes the set of dependencies obtained from $\Sigma$ by replacing each TGD $\rho$ in $\Sigma$ with all ground versions of $\rho$ that are coherent with $\gamma$. Note that the set $\Sigma^\gamma$ can be infinite.
We are now ready to present our notion of solution.

\begin{definition}[Supported Solution]\label{def:supported-solution}
Consider a setting $\Set = \des$ and a source instance $I$ of $\Set$. A target instance $J$ of $\Set$ is a \emph{supported solution of $I$ w.r.t.\ $\Set$} if there exists an \exchoice\ $\gamma$ such that $I \cup J$  satisfies $\STTGD^\gamma$ and $J$ satisfies $\TTGD^\gamma$, and there is no other target instance $J' \subsetneq J$ of $\Set$ such that $I \cup J'$ satisfies $\STTGD^\gamma$ and $J'$ satisfies $\TTGD^\gamma$.
\end{definition}

Note that a supported solution contains no nulls.
We use $\ssol{I,\Set}$ to denote the set of all supported solutions of $I$ w.r.t.\ $\Set$.

\begin{example}
	Consider the data exchange setting $\Set$ and the source instance $I$ of Example~\ref{ex:employee}.
	The target instance 
	$J = \{\EmpNew(\rel{john},\rel{miami}),\EmpNew(\rel{mary},\rel{chicago})\}$
	is a supported solution of $I$ w.r.t.\ $\Set$. Indeed, consider the \exchoice\ $\gamma$ such that $\gamma(\rho_1,\rel{john}) = \{ (z,\rel{miami})\}$, and $\gamma(\rho_1,\rel{mary}) = \{(z,\rel{chicago})\}$. Then,  $\STTGD^\gamma$ is
	\[
	\begin{array}{ll}
	\{\EmpC(\alpha,\beta) \rightarrow \EmpNew(\alpha,\beta) \mid \alpha,\beta \in \Const\} \cup \\
	 \{\Emp(\alpha) \rightarrow \EmpNew(\alpha,\beta) \mid  \alpha \in \Const \wedge (z,\beta) \in \gamma(\rho_1,\alpha)\},
	\end{array}
	\]
	whereas $\TTGD^\gamma$ is the set containing the EGD $\eta$ of Example~\ref{ex:employee}, and the set of TGDs
	\[
	\begin{array}{l}
	 \{\EmpNew(\alpha,\beta),\EmpNew(\alpha',\beta) \rightarrow \SameC(\alpha,\alpha') \mid 
	\alpha,\alpha',\beta \in \Const\}.
	\end{array}
	\]
	Clearly, $I \cup J$ satisfies $\STTGD^\gamma$, and $J$ satisfies $\TTGD^\gamma$, and any other strict subset $J'$ of $J$ is such that $I \cup J'$ does not satisfy $\STTGD^\gamma$. Another supported solution is
$\{\EmpNew(\rel{john},\rel{miami})$, $\EmpNew(\rel{mary},\rel{miami})$, $\SameC(\rel{john},\rel{mary})\}$.
	\end{example}

With the notion of supported solution in place, it is now straightforward to define the supported certain answers.

\begin{definition}[Supported Certain Answers]
    Consider a data exchange setting $\Set$, a source instance $I$ of $\Set$ and a query $Q$ over $\Trg$. The \emph{supported certain answers to $Q$ over $I$ w.r.t.\ $\Set$} is the set of tuples $\scert{I}{Q}{\Set} = \bigcap_{J \in \ssol{I,\Set}} Q(J)$.

\end{definition}

\begin{example}
	Consider the data exchange setting $\Set$, the source instance $I$, and the query $Q$ of Example~\ref{ex:wrong-answer}. It is not difficult to see that the only supported solution of $I$ w.r.t.\ $\Set$ is the instance
	$$J = \{\AllOrd(1),\AllOrd(2), \Paid(1)\}.$$
	Thus, the supported certain answers to $Q$ over $I$ w.r.t.\ $\Set$ are
	$ \scert{I}{Q}{\Set} = Q(J) = \{2\}$.
	Consider now the data exchange setting $\Set$, the source instance $I$, and the query $Q$ of Example~\ref{ex:employee}. Then, one can verify that
	$\scert{I}{Q}{\Set} = \emptyset$.
\end{example}

We now start establishing some important results regarding supported solutions and supported certain answers. The following theorem states that supported solutions are a refined subset of the classical ones, but whether a supported solution exists is still  tightly related to the existence of a classical one.

\begin{theorem}\label{thm:sol-relationship}
	Consider a data exchange setting $\Set$. For every source instance $I$ of $\Set$, it holds that:
	\begin{enumerate}[align=parleft,left=0pt..1em]
		\item $\ssol{I,\Set} \subseteq \sol{I,\Set}$, and 
		\item $\ssol{I,\Set} = \emptyset$ iff $\sol{I,\Set} = \emptyset$.
	\end{enumerate}
\end{theorem}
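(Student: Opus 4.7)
For claim (1), my plan is to unpack both sides directly. Let $J \in \ssol{I,\Set}$ with witness ex-choice $\gamma$, and fix a dependency $\delta \in \STTGD \cup \TTGD$. If $\delta$ is an EGD, then $\delta$ itself lies in $\TTGD^\gamma$ by the definition of $\Sigma^\gamma$, so $J \models \delta$ is immediate. If $\delta$ is a TGD $\rho$, for any body homomorphism $h$ into $I \cup J$ I would extend $h$ to $h'$ on all variables of $\rho$ by setting $h'(z) = c$ for every $(z,c) \in \gamma(\rho, h(\fr(\rho)))$; this is well defined because supported solutions contain no nulls, so $h(\fr(\rho)) \in \Const^{|\fr(\rho)|}$. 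The rule $h(\body(\rho)) \to h'(\head(\rho))$ is a ground version of $\rho$ coherent with $\gamma$, so it lies in $\STTGD^\gamma$ (resp.\ $\TTGD^\gamma$); since $I \cup J$ (resp.\ $J$) satisfies the corresponding set, $h'(\head(\rho))$ must be contained in it, witnessing that $\rho$ is satisfied.

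For claim (2), the implication $\sol{I,\Set} = \emptyset \Rightarrow \ssol{I,\Set} = \emptyset$ is immediate from (1). For the converse, given $J \in \sol{I,\Set}$, I would construct a supported solution in three stages. First, pick an injective substitution $\sigma \colon \Null \to \Const$ whose image consists of constants fresh w.r.t.\ $I$, $J$, $\STTGD$, and $\TTGD$, and put $K = \sigma(J)$; being fresh-and-injective, $\sigma$ is an invertible renaming on $\adom(J)$. Any body homomorphism $h'$ into $I \cup K$ pulls back to a body homomorphism $\sigma^{-1} \circ h'$ into $I \cup J$, a head witness in $J$ exists by $J \in \sol{I,\Set}$, and pushing it forward through $\sigma$ yields a head witness in $K$; the same lifting handles the target EGDs. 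Hence $K \in \sol{I,\Set}$, and $K$ is null-free.

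Next, I would extract an ex-choice $\gamma$ from $K$. For each TGD $\rho$ and each $\bt \in \Const^{|\fr(\rho)|}$, if some body homomorphism $h$ into $I \cup K$ with $h(\fr(\rho)) = \bt$ exists, fix one such $h$ together with a head witness $h^*$ and set $\gamma(\rho, \bt) = \{(z, h^*(z)) \mid z \in \exvar(\rho)\}$; otherwise set $\gamma(\rho, \bt)$ to any default. Any ground version of $\rho$ coherent with $\gamma$ whose body is satisfied in $I \cup K$ has frontier $\bt$ and, by coherence, the same existential values as $h^*$; since $\head(\rho)$ depends only on $\fr(\rho) \cup \exvar(\rho)$, its ground head equals $h^*(\head(\rho))$, which lies in $I \cup K$. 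Thus $I \cup K \models \STTGD^\gamma$ and $K \models \TTGD^\gamma$. As $K$ is finite, iteratively deleting any fact whose removal preserves both satisfactions terminates in a minimal $J^* \subseteq K$; every strict subset of $J^*$ is still a subset of $K$, so this minimality inside $K$ coincides with the minimality required by Definition~\ref{def:supported-solution}, giving $J^* \in \ssol{I,\Set}$.

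The main obstacle I anticipate is ensuring that a single $\gamma$ built from one $K$ uniformly certifies every coherent ground version of each TGD, even though many distinct body homomorphisms may share the same frontier tuple $\bt$. The saving observation is that $\head(\rho)$ uses only variables in $\fr(\rho) \cup \exvar(\rho)$: once $\bt$ and $\gamma(\rho, \bt)$ are fixed, the ground head is determined, so a single witness $h^*$ certifies all coherent ground versions with frontier $\bt$ simultaneously. A secondary subtlety is why $\sigma$ must be injective into fresh constants; otherwise, collapsing two nulls could create spurious body matches or EGD violations in $K$ that have no counterpart in $J$, and the lifting step of the first stage would break.
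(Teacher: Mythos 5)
Your proof is correct and follows essentially the same route as the paper's: Item~1 by unfolding the definitions, and Item~2 by renaming nulls to fresh constants, reading off an \exchoice\ from the witnessing homomorphisms (using that heads depend only on frontier and existential variables), and passing to a minimal subinstance. The only cosmetic difference is that you rename first and minimize last, with minimality taken w.r.t.\ $\Sigma^\gamma$, whereas the paper first extracts a minimal classical sub-solution and then renames; both orderings work.
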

\begin{proof}
	Item 1 follows by definition. For proving Item 2, it suffices to show that $\sol{I,\Set} \neq \emptyset$ implies $\ssol{I,\Set} \neq \emptyset$.
	Let $\Set = \des$ and consider a solution $J \in \sol{I,\Set}$. We construct from $J$ a supported solution $\hat{J}$ in $\ssol{I,\Set}$. Let
	$J'$ be one of the minimal subsets of $J$ such that $J'$ is still a solution of $\sol{I,\Set}$. Moreover, let $\hat{J}$ be the instance obtained from $J'$, where each null $\perp$ in $J'$ is replaced with a new constant $c_\perp$ not occurring in $\STTGD \cup \TTGD$ and $J'$. Since $\hat{J}$ and $J'$ are the same instance, up to null renaming, we conclude that $\hat{J}$ is also a solution in $\sol{I,\Set}$. To see that $\hat{J}$ is a supported solution,  consider the following \exchoice\ $\gamma$. For every TGD $\rho \in \STTGD \cup \TTGD$, and every tuple $\bt$ of constants such that there exists a homomorphism $h$ from $\body(\rho)$ to $\hat{J}$, and $\bt = h(\fr(\rho))$, let $\gamma(\rho,\bt) = \{(z,h(z)) \mid z \in \exvar(\rho)\}$. By construction of $\gamma$, $I \cup \hat{J}$ satisfies $\STTGD^\gamma$, and $\hat{J}$ satisfies $\TTGD^{\gamma}$. Since $\hat{J}$ is minimal, i.e., for every $J'' \subsetneq \hat{J}$, $J'' \not \in \sol{I,\Set}$, from Item 1 of this claim, every $J'' \subsetneq J$ is such that $J'' \not \in \ssol{I,\Set}$, i.e., either $I \cup J''$ does not satisfy $\STTGD^\gamma$ or $J''$ does not satisfy $\TTGD^\gamma$. Thus, $\hat{J}$ is a supported solution of $\ssol{I,\Set}$, and the claim follows.
\end{proof}

Regarding certain answers, we show that supported solutions indeed enjoy an important property: supported certain answers and classical certain answers coincide, when focusing on positive queries. Note that this does not necessarily follow from Theorem~\ref{thm:sol-relationship}.

\begin{theorem}\label{thm:equiv-positive}
    Consider a setting $\Set = \des$ and a positive query $Q$ over $\Trg$.
    For every source instance $I$ of $\Set$,
    $\scert{I}{Q}{\Set} = \cert{I}{Q}{\Set}$.
\end{theorem}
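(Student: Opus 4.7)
My plan is to establish the two inclusions $\cert{I}{Q}{\Set} \subseteq \scert{I}{Q}{\Set}$ and $\scert{I}{Q}{\Set} \subseteq \cert{I}{Q}{\Set}$ separately. The first inclusion is essentially free: since $Q(J)$ is monotone in the indexing set of the intersection, Item 1 of Theorem~\ref{thm:sol-relationship} ($\ssol{I,\Set} \subseteq \sol{I,\Set}$) immediately yields $\bigcap_{J \in \sol{I,\Set}} Q(J) \subseteq \bigcap_{J \in \ssol{I,\Set}} Q(J)$. No use of positivity is needed here.

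For the nontrivial direction $\scert{I}{Q}{\Set} \subseteq \cert{I}{Q}{\Set}$, the plan is to take an arbitrary tuple $\bt \in \scert{I}{Q}{\Set}$ and an arbitrary classical solution $J' \in \sol{I,\Set}$ and show $\bt \in Q(J')$. The idea is to manufacture a supported solution \emph{inside} $J'$ that contains $\bt$ in its output, and then transfer this back to $J'$ using monotonicity of positive queries. Concretely, I would follow the construction in the proof of Theorem~\ref{thm:sol-relationship}: pick a minimal $J'' \subseteq J'$ that is still a classical solution, and replace the nulls of $J''$ by fresh constants to obtain a supported solution $\hat{J}$. Crucially, since the fresh constants can be chosen arbitrarily from the infinite set $\Const$, I will choose them to avoid the (finitely many) constants appearing in $\bt$, in $J''$, and in $\STTGD \cup \TTGD$. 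By Theorem~\ref{thm:sol-relationship}'s construction, $\hat{J} \in \ssol{I,\Set}$, hence $\bt \in Q(\hat{J})$.

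The map $h$ that sends each fresh constant back to the corresponding null (and is the identity elsewhere) is a bijective homomorphism from $\hat{J}$ to $J''$. Because UCQs are preserved under homomorphisms, $\bt \in Q(\hat{J})$ implies $h(\bt) \in Q(J'')$; since $\bt$'s entries are disjoint from the fresh constants by our choice, $h(\bt) = \bt$, so $\bt \in Q(J'')$. Finally, $J'' \subseteq J'$ and monotonicity of positive queries (a standard property of UCQs under active-domain semantics) give $\bt \in Q(J')$, as required.

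The main obstacle, and the step I want to carry out with care, is making sure the reduction $\hat{J} \rightsquigarrow J''$ fixes $\bt$ pointwise. This hinges on two facts that must be checked: (i) $\bt$ is a tuple of constants, so it makes sense to ask that the fresh constants miss it, and (ii) the freshness budget is truly free, so we can always pick fresh constants outside $\adom(\bt)$. Once this is handled, the rest is a routine application of monotonicity and homomorphism-preservation for positive queries; no subtlety from target EGDs or existential variables intervenes, since they have been absorbed into the existence of $\hat{J}$ via Theorem~\ref{thm:sol-relationship}.
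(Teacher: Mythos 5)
Your argument is essentially the paper's proof: the paper establishes the hard inclusion in contrapositive form ($\bt \notin \cert{I}{Q}{\Set} \Rightarrow \bt \notin \scert{I}{Q}{\Set}$), but it uses exactly your construction --- a minimal sub-solution, renaming of nulls to fresh constants to obtain a supported solution via Theorem~\ref{thm:sol-relationship}, and then preservation plus monotonicity of positive queries. One small fix: your fresh constants must also avoid the constants occurring in $Q$ itself (the paper's avoid-list is ``$\bt$, $Q$, $\STTGD \cup \TTGD$, and $J'$''), since the backward map $h$ sends constants to nulls and is therefore not a homomorphism in the paper's sense; UCQ preservation under such an atom-preserving map requires $h$ to fix every constant mentioned in the query, not just those in $\bt$.
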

\begin{proof}
	The fact that $\cert{I}{Q}{\Set} \subseteq \scert{I}{Q}{\Set}$, follows from Item 1 of Theorem~\ref{thm:sol-relationship}. To prove that $\scert{I}{Q}{\Set} \subseteq \cert{I}{Q}{\Set}$, assume $\bt \not \in \cert{I}{Q}{\Set}$, which means that there exists a solution $J$ of $I$ w.r.t.\ $\Set$ such that $\bt \not \in Q(J)$.
	Since $Q$ is positive, and hence monotone, $\bt \not \in Q(J)$ iff $\bt \not \in Q(J')$, where $J'$ is one of the minimal subsets of $J$ such that $J'$ is still a solution of $I$ w.r.t.\ $\Set$.
	Let $\hat{J}$ be the instance obtained from $J'$, where each null $\perp$ in $J'$ is replaced with a new constant $c_\perp$ not occurring in $\bt$, $Q$, $\STTGD \cup \TTGD$, and $J'$. With a similar discussion to the one given in the proof of Theorem~\ref{thm:sol-relationship}, we conclude that $\hat{J}$ is a supported solution of $I$ w.r.t.\ $\Set$. Since $Q$ is positive, and since $\bt$ and $Q$ do not contain any of the constants introduced in $J'$, we conclude that $\bt \not \in Q(\hat{J})$, which implies that $\bt \not \in \scert{I}{Q}{\Set}$, and the claim follows.
\end{proof}
From the above, we conclude that for positive queries, certain query answering can be performed as done in the classical setting, and thus all important results from that setting, like query answering via universal solutions, carry over.

\begin{corollary}
	Consider a setting $\Set = \des$ and a positive query $Q$ over $\Trg$. If $J$ is a (classical) universal solution of $I$ w.r.t.\ $\Set$, then
	$ \scert{I}{Q}{\Set} = \nonull{Q(J)}$.
\end{corollary}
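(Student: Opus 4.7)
The plan is to obtain this corollary essentially for free by chaining together two already-established results, so it should be a short two-step argument with no new technical content. The key observation is that for positive queries, Theorem~\ref{thm:equiv-positive} tells us that the supported and classical certain answers coincide; meanwhile Theorem~\ref{thm:universal-fagin} tells us that the classical certain answers to a positive query over a universal solution $J$ are precisely $\nonull{Q(J)}$. Combining these equalities yields the claim.

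More concretely, I would first invoke Theorem~\ref{thm:equiv-positive} to write $\scert{I}{Q}{\Set} = \cert{I}{Q}{\Set}$, which is applicable because $Q$ is assumed to be positive. Then I would apply Theorem~\ref{thm:universal-fagin}, using the hypothesis that $J$ is a classical universal solution of $I$ w.r.t.\ $\Set$, to conclude $\cert{I}{Q}{\Set} = \nonull{Q(J)}$. Transitivity of equality then gives $\scert{I}{Q}{\Set} = \nonull{Q(J)}$.

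There is essentially no obstacle here: both ingredients are already in place, and the corollary is really just highlighting the practical consequence that existing universal-solution-based query answering machinery works unchanged for positive queries under the supported semantics. The only thing to verify is that the hypotheses of the two theorems line up with those of the corollary, namely that $Q$ is positive (explicit) and that $J$ is a classical universal solution (also explicit), which they do by assumption.
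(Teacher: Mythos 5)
Your proposal is correct and matches the paper's own one-line proof exactly: the corollary follows by combining Theorem~\ref{thm:equiv-positive} (supported and classical certain answers coincide for positive queries) with Theorem~\ref{thm:universal-fagin} (classical certain answers over a universal solution equal $\nonull{Q(J)}$). Nothing further is needed.
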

\begin{proof}
	It follows from Theorem~\ref{thm:universal-fagin} and Theorem~\ref{thm:equiv-positive}.
\end{proof}

We now move to the complexity analysis of the two most important data exchange tasks:
deciding whether a supported solution exists, and computing the supported certain answers to a query.

\section{Complexity}
In data exchange, it is usually assumed that a setting $\Set$ does not change over time, and a given query $Q$ is much smaller than a given source instance. Thus, for understanding the complexity of a data exchange problem, it is customary to assume that $\Set$ and $Q$ are fixed, and only $I$ is considered in the complexity analysis, i.e., we consider the \emph{data complexity} of the problem.
Hence, the problems we are going to discuss will always be parametrized via a setting $\Set$, and a query $Q$ (for query answering tasks).
The first problem we consider is deciding whether a supported solution exists; $\Set$ is a fixed data exchange setting.
\begin{center}
	\framebox[8cm]{
			\begin{tabular}{ll}
			{\small PROBLEM} :  & $\supexprob{\Set}$
			\\
			{\small INPUT} :    & A source instance $I$ of $\Set$.
			\\
			{\small QUESTION} : & Is $\ssol{I,\Set} \neq \emptyset$?
			\end{tabular}}
\end{center}
The above problem is very important in data exchange, as one of the main goals is to actually construct a target instance that can be exploited for query answering purposes. Hence, knowing in advance whether at least a supported solution exists is of paramount importance.

Thanks to Item~2 of Theorem~\ref{thm:sol-relationship}, all the complexity results for checking the existence of a classical solution can be directly transferred to our problem.


\begin{theorem}
	There exists a data exchange setting $\Set$ such that $\supexprob{\Set}$ is undecidable.
\end{theorem}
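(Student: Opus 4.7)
The plan is to reduce this to the corresponding classical question, since Item 2 of Theorem~\ref{thm:sol-relationship} tells us that for every source instance $I$ of $\Set$, we have $\ssol{I,\Set} \neq \emptyset$ if and only if $\sol{I,\Set} \neq \emptyset$. Hence $\supexprob{\Set}$ is, as a decision problem, equivalent (under a trivial identity reduction on inputs) to the classical problem of deciding whether a (classical) solution exists for a given source instance under the fixed setting $\Set$.

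Given this equivalence, it suffices to exhibit a fixed data exchange setting $\Set = \des$ for which classical solution existence is undecidable, and then invoke Item 2 of Theorem~\ref{thm:sol-relationship}. Such settings are known to exist in the literature: the typical approach is to encode an undecidable problem (for instance, the halting problem for Turing machines, or equivalently the Post Correspondence Problem) into the chase of a suitably chosen finite set $\TTGD$ of target TGDs and EGDs, while $\STTGD$ is used only to copy the input of the undecidable problem from the source to the target. The dependencies in $\TTGD$ are engineered so that the chase on the copied input fails (i.e., an EGD is violated on distinct constants) exactly when the encoded instance is a ``no'' instance, which means a solution exists if and only if the encoded instance is a ``yes'' instance.

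My plan for the writeup is therefore very short: first restate the equivalence of the two existence problems that follows from Theorem~\ref{thm:sol-relationship}(2); then cite one of the standard undecidability results for classical solution existence with fixed TGDs/EGDs (e.g.\ the result attributed to Kolaitis, Panttaja, and Tan, or to Deutsch, Nash, and Remmel) producing the required fixed $\Set$; and finally conclude that $\supexprob{\Set}$ inherits the undecidability. The only delicate point is making sure the cited reference gives undecidability with a \emph{fixed} setting (data complexity), rather than the easier combined-complexity variant; this is indeed established in the literature, so the argument goes through without further technical work.

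The main obstacle is purely bibliographic rather than mathematical: picking a citation whose construction already yields data-complexity undecidability for fixed $\STTGD$ and $\TTGD$. Once that is in place, no additional chasing or encoding is needed, because Theorem~\ref{thm:sol-relationship}(2) does all the heavy lifting of transferring the hardness from classical to supported solutions.
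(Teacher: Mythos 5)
Your proposal is correct and follows exactly the paper's own argument: invoke Item 2 of Theorem~\ref{thm:sol-relationship} to equate supported-solution existence with classical-solution existence, and then cite the known data-complexity undecidability of the latter for a fixed setting (the paper uses the Kolaitis--Panttaja--Tan result, precisely the reference you identify). No gap.
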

\begin{proof}
	It follows from Theorem~\ref{thm:sol-relationship} and from the fact that there exists a data exchange setting $\Set$ such that checking whether a classical solution exists is undecidable~\cite{KolaitisPT06}.
\end{proof}

Despite the negative result above, we also inherit positive results from the literature, when focusing on some of the most important data exchange scenarios, known as \emph{weakly-acyclic}. Such settings only allow target TGDs to belong to the language of weakly-acyclic TGDs, which have been first introduced in the seminal paper~\cite{FaginKMP05}, and is now well-established as the main language for data exchange purposes.

We start by introducing the notion of weak-acyclicity. We recall that for a schema $\Sch$, $\pos{\Sch}$ denotes the set of all positions $R[i]$, where $R/n \in \Sch$ and $i \in \{1,\ldots,n\}$, and for a TGD $\rho = \varphi(\x,\y) \rightarrow \exists \z\, \psi(\y,\z)$, $\fr(\rho)$ denotes the tuple $\y$.

\begin{definition}[Dependency Graph~\cite{FaginKMP05}]\label{def:dependency-graph}
	Consider a set $\Sigma$ of TGDs over a schema $\Sch$. The \emph{dependency graph} of $\Sigma$ is a directed graph $\depg{\Sigma}=(N,E)$, where $N = \pos{\Sch}$ and $E$ contains only the following edges.
	For each $\rho \in \Sigma$, for each $x \in \fr(\rho)$, and
	for each position $\pi$ in $\body(\rho)$ where $x$ occurs:
	
	\begin{itemize}[align=parleft,left=0pt..1em]
		\item there is a \emph{normal} edge $(\pi,\pi') \in E$, for each position $\pi'$ in $\head(\rho)$ where $x$ occurs, and
		\item there is a \emph{special} edge $(\pi,\pi') \in E$, for each position $\pi'$ in $\head(\rho)$ where an existentially quantified variable $z \in \exvar(\rho)$ occurs.
	\end{itemize}
\end{definition}

\begin{definition}\label{def:wa}
	A set of TGDs $\Sigma$ is \emph{weakly-acyclic} if no cycle in 
	$\depg{\Sigma}$ contains a special edge. A data exchange setting $\des$ is \emph{weakly-acyclic} if the set of TGDs in $\TTGD$ is weakly-acyclic.
\end{definition}

\begin{example}
	The settings of Examples~\ref{ex:wrong-answer} and~\ref{ex:employee} are weakly-acyclic, whereas the data exchange setting $\Set = \des$, where $\Src = \{S/2\}$, $\Trg = \{T/2\}$, $\STTGD = \{S(x,y) \rightarrow T(x,y)\}$, and $\TTGD = \{T(x,y) \rightarrow \exists z\, T(y,z)\}$ is not, since $(T[2],T[2])$ is a special edge in $\depg{\TTGD}$.
\end{example}

The following result follows.
\begin{theorem}\label{thm:exsol-wa}
	For every weakly-acyclic data exchange setting $\Set$, $\supexprob{\Set}$ is in $\PTIME$.
\end{theorem}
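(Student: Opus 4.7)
The plan is to exploit Item~2 of Theorem~\ref{thm:sol-relationship}, which essentially collapses $\supexprob{\Set}$ to the problem of checking the existence of a classical solution. More precisely, for every source instance $I$ of $\Set$ we have $\ssol{I,\Set} \neq \emptyset$ iff $\sol{I,\Set} \neq \emptyset$, so any decision procedure for the latter immediately serves as a decision procedure for the former, and their data complexities coincide.

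Then I would invoke the well-known result from~\cite{FaginKMP05} stating that, when the target TGDs are weakly-acyclic, the existence of a classical solution can be decided in polynomial time via the chase. Specifically, one runs the standard chase on $I$ with the dependencies in $\STTGD \cup \TTGD$; weak-acyclicity guarantees that the chase terminates after polynomially many steps (in $|I|$), because the ranks of positions in $\depg{\Sigma}$ bound the number of nulls that can ever appear at any position. A classical solution exists iff the chase does not fail, i.e., iff no EGD forces the equality of two distinct constants during the process. Checking failure can be done on the fly in polynomial time.

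Combining the two ingredients yields the desired algorithm for $\supexprob{\Set}$: given input $I$, run the chase of $I$ under $\STTGD \cup \TTGD$, which terminates in polynomial time by weak-acyclicity of $\TTGD$, and answer \emph{yes} iff the chase succeeds. Correctness is immediate from Item~2 of Theorem~\ref{thm:sol-relationship} together with the classical characterization of classical-solution existence via a successful chase, and the membership in $\PTIME$ follows from the polynomial bound on the chase length.

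No real obstacle arises here, since both needed ingredients are already available: Theorem~\ref{thm:sol-relationship} is proved in the excerpt, and the polynomial-time chase termination for weakly-acyclic target dependencies is a standard result. The only point worth being explicit about is that the weak-acyclicity condition in Definition~\ref{def:wa} constrains only the TGDs in $\TTGD$, which is exactly the condition under which the classical PTIME result applies, so no adaptation of that result is needed.
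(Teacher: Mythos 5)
Your proposal is correct and follows exactly the paper's argument: the paper proves this theorem by combining Item~2 of Theorem~\ref{thm:sol-relationship} with the polynomial-time chase-based solution-existence check for weakly-acyclic settings from \cite[Corollary~3.10]{FaginKMP05}. Your additional remarks on chase termination and failure detection simply unpack what that cited corollary already provides.
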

\begin{proof}
	It follows from Theorem~\ref{thm:sol-relationship} and \cite[Corollary~3.10]{FaginKMP05}.
\end{proof}

We now move to the second crucial task: computing supported certain answers.
Since this problem outputs a set, it is standard to focus on its decision version.
For a fixed data exchange setting $\Set$ and a fixed query $Q$, we consider the following decision problem:
\begin{center}
	
	\framebox[12cm]{\begin{tabular}{ll}
			{\small PROBLEM} :  & $\scertprob{\Set,Q}$
			\\
			{\small INPUT} :    & A source instance $I$ of $\Set$ and 
			 a tuple $\bt \in \Const^{\ar(Q)}$.
			\\
			{\small QUESTION} : & Is $\bt \in \scert{I}{Q}{\Set}$?
	\end{tabular}}
\end{center}

One can easily show that the above problem is logspace equivalent to the one of computing the supported certain answers. 

%
%
%
%

We start by studying the problem in its full generality, and show that there is a price to pay for query answering with general queries.

\begin{theorem}\label{thm:scert-undecidable}
	There exists a data exchange setting $\Set = \des$, with $\TTGD$ having only TGDs, and a query $Q$ over $\Trg$, such that $\scertprob{\Set,Q}$ is undecidable. 
\end{theorem}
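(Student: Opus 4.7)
The plan is to exploit the fact that supported certain answers are defined as an intersection over $\ssol{I,\Set}$: by choosing a Boolean FO query that is unsatisfiable on every target instance, the supported certain answer becomes $\{\langle\rangle\}$ exactly when the family of supported solutions is empty, and $\emptyset$ otherwise. This turns the undecidability of emptiness of $\ssol{I,\Set}$ directly into undecidability of $\scertprob{\Set,Q}$, without requiring any new undecidability machinery.

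Concretely, first I would instantiate a TGD-only data exchange setting $\Set = \des$ for which deciding $\ssol{I,\Set} \neq \emptyset$ is undecidable in the input source $I$. Such a setting is obtained by combining Item~2 of Theorem~\ref{thm:sol-relationship} with the classical undecidability of the (finite) solution-existence problem~\cite{KolaitisPT06}; inspecting the construction in~\cite{KolaitisPT06} confirms that the target dependencies used there are already TGDs (no EGDs are needed), so the TGD-only restriction in the statement comes essentially for free, via the same witness setting used in the preceding undecidability theorem of this section.

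Next, I would pick the Boolean query $Q = \exists x.\, R(x) \wedge \neg R(x)$ for some relation $R \in \Trg$, which satisfies $Q(J) = \emptyset$ on every target instance $J$. By definition of supported certain answers, $\scert{I}{Q}{\Set} = \bigcap_{J \in \ssol{I,\Set}} \emptyset$, which equals $\{\langle\rangle\}$ when $\ssol{I,\Set} = \emptyset$ (vacuous intersection over the universe of $0$-ary tuples) and $\emptyset$ otherwise. The identity map on source instances is therefore a many-one reduction from the complement of $\supexprob{\Set}$ to $\scertprob{\Set,Q}$, so the latter is undecidable, giving the claim.

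The main obstacle is essentially the sanity check that the undecidability witness from~\cite{KolaitisPT06} can be taken to have no target EGDs. This is a routine inspection of the construction (which encodes a Turing-machine halting or tiling problem entirely via TGDs) rather than a technical difficulty, so no new ideas are required beyond what Theorem~\ref{thm:sol-relationship} already provides; the bulk of the proof reduces to writing down the trivial reduction above.
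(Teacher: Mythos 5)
Your reduction has a fatal gap at the step you dismiss as a ``routine inspection'': there is \emph{no} TGD-only setting $\Set$ for which $\supexprob{\Set}$ is undecidable, so the complement of solution existence cannot serve as the source of hardness here. For any TGD-only setting and any source instance $I$, a classical solution always exists: take $C = \adom(I) \cup \{c^*\}$ for one fresh constant $c^*$ and let $J$ be the set of \emph{all} facts over $\Trg$ with terms in $C$; every homomorphism from a TGD body lands in $C$, and extending it by sending existential variables to $c^*$ yields head atoms that are already in $J$. By Item~2 of Theorem~\ref{thm:sol-relationship}, $\ssol{I,\Set} \neq \emptyset$ for every $I$, so $\supexprob{\Set}$ is trivially decidable (always ``yes''). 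The undecidability construction of \cite{KolaitisPT06} that the earlier theorem invokes relies essentially on target EGDs (to force the target relation to encode a function), which is exactly what the statement you are proving forbids. Your unsatisfiable query $Q = \exists x\, R(x) \wedge \neg R(x)$ then always yields $\scert{I}{Q}{\Set} = \emptyset$ on your witness setting, and the reduction proves nothing.

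The paper's proof works very differently: hardness comes from the \emph{query}, not from solution existence. It reduces the embedding problem for finite semigroups to the complement of $\scertprob{\Set,Q}$ using a TGD-only setting in which a single non-weakly-acyclic target TGD, $\rel{G}(x,y,z) \rightarrow \exists x',y',z'\, \rel{G}(x',y',z') \wedge \rel{Aux}(x,y,z)$, lets supported solutions introduce arbitrarily many fresh elements into $\rel{G}$; the Boolean query $Q$ is true precisely when $\rel{G}$ fails to encode a total associative extension of the source function, so $\bt \notin \scert{I_{A,f}}{Q}{\Set}$ iff some finite supported solution encodes a valid semigroup embedding. If you want to salvage your general strategy of piggybacking on an existing undecidability result, you would need one that holds for TGD-only settings at the level of certain answering rather than solution existence, which is essentially what the paper's semigroup reduction provides from scratch.
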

\begin{proof}
	We provide a polynomial-time reduction from the \emph{Embedding Problem for Finite Semigroups $\mathsf{EMB}$~\cite{KolaitisPT06}}. The reduction is an adaptation of the one used for proving Proposition~6.1 in \cite{HernichLibkin2011}.
	Inputs of $\mathsf{EMB}$ are pairs of the form $A,f$, where $A$ is a finite set, and $f$ is a partial function of the form $f: A \times A \rightarrow A$. The question is whether there exists a finite set $B \supseteq A$, and a total function $g : B \times B \rightarrow B$, such that $g$ is associative\footnote{A total function $g: B \times B \rightarrow B$ is associative if for every $a,b,c \in B$, $g(g(a,b),c) = g(a,g(b,c))$.}, and $g$ extends $f$, i.e., whenever $f(a,b)$ is defined, $g(a,b) = f(a,b)$.
	
	%
	Let us first introduce some notation. Consider a finite set $A$ and a partial function $f : A \times A \rightarrow A$. We define the instance:
	$$ I_{A,f} = \{\rel{F}(a,b,c) \mid a,b,c \in A \text{ and } f(a,b) = c\}.$$
	
	Consider now the data exchange setting $\Set = \des$, where $\Src = \{\rel{F}/3\}$ and $\Trg = \{\rel{G}/3\}$. Intuitively, the relation $\rel{F}$ collects all the triples $a,b,c$ such that $f(a,b) = c$, whereas the relation $\rel{G}$ collects all the triples of the extended associative function $g$. The sets $\STTGD$ and $\TTGD$ are defined as $\STTGD = \{\rel{F}(x,y,z) \rightarrow \rel{G}(x,y,z)\}$ and $\TTGD = \{\rel{G}(x,y,z) \rightarrow \exists x',y',z'\ \rel{G}(x',y',z') \wedge \rel{Aux}(x,y,z)\}$. Roughly $\STTGD$ is in charge of forcing the function stored in $\rel{G}$ to be an extension of the function stored in $\rel{F}$, whereas $\TTGD$ is in charge of adding additional entries to $\rel{G}$.
	
	The difference with the construction of \cite{HernichLibkin2011} is in the set $\TTGD$. Here, the head of the only TGD in $\TTGD$ has an additional auxiliary atom $\rel{Aux}(x,y,z)$. 
	Intuitively, since the set $\TTGD$ is in charge of extending the function defined by the relation $\rel{F}$ by introducing additional terms, in order for these terms to be actually introduced in a supported solution, we require that every body variable is also a frontier variable. 
	Regarding our query $Q$, it is the same as the one in~\cite{HernichLibkin2011}.
	Hence, instead of giving the precise expression of $Q$, we only describe its properties.
	The query $Q$ over $\Trg = \{\rel{G}/3\}$ is a Boolean query which is true (i.e., the empty tuple is its only output) if either $\rel{G}$ does not encode a function, i.e., it maps the same pair $(a,b)$ to different terms, or $\rel{G}$ does not encode an associative function,  or $\rel{G}$ does not encode a total function. In other words, $Q$ checks whether $\rel{G}$ does not encode a solution for $\mathsf{EMB}$.
	
	We are now ready to present the reduction.
	Let $A$ be a finite set and $f : A \times A \rightarrow A$ be a partial function. The reduction constructs the source instance $I_{A,f}$ and the empty tuple $\bt = ()$.
	Clearly, $I_{A,f}$ can be constructed in polynomial time w.r.t.\ $|I|$.
	It remains to show that $A,f$ is a ``yes''-instance of $\mathsf{EMB}$ iff $\bt \not \in \scert{I_{A,f}}{Q}{\Set}$.
	
	\emph{(Only if direction)} Assume $\bt \not \in \scert{I_{A,f}}{Q}{\Set}$. Then, there exists a supported solution $J \in \ssol{I_{A,f},\Set}$ of $I_{A,f}$ w.r.t.\ $\Set$ such that $\bt \not \in Q(J)$. By definition of supported solution, $J$ is finite and it only contains atoms with relation $\rel{G}$. Thus, by definition of $\Set$, $\bt \not \in Q(J)$ implies that $J$ necessarily encodes an extension of $f$, which is also total and associative.
	
	\emph{(If direction)} Assume $A,f$ is a ``yes''-instance of $\mathsf{EMB}$, and let $B \supseteq A$ be a finite set, and $g : B \times B \rightarrow B$ be the total associative function that extends $f$. Then, consider the instance $J$ over $\Trg$ defined as $J = \{\rel{G}(a,b,c) \mid a,b,c \in B \text{ and } g(a,b) = c \}$. It is not difficult to verify that $J$ is a supported solution of $I_{A,f}$ w.r.t.\ $\Set$. Finally, by construction of $J$, $\bt \not \in Q(J)$ as needed.
\end{proof}

Although the complexity result above tells us that computing supported certain answers might be infeasible in some settings, we can show that for weakly-acyclic settings, the complexity is more manageable. In particular, we prove that in this case, the problem is in $\co\NP$ and that this complexity bound is tight (i.e., there exist weakly-acyclic settings and queries for which the problem is $\co\NP\hard$).
We first focus on the upper bound.

\begin{theorem}\label{thm:scert-conp-upper}
	For every weakly-acyclic setting $\Set$ and every query $Q$, $\scertprob{\Set,Q}$ is in $\co\NP$.
\end{theorem}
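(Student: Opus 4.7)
The plan is to prove membership in $\co\NP$ by showing that its complement---deciding whether $\bt \notin \scert{I}{Q}{\Set}$---is in $\NP$. I would exhibit a non-deterministic polynomial-time procedure that guesses a supported solution $J$ together with a succinct representation of an ex-choice $\gamma$ witnessing it, and then verifies that $\bt \notin Q(J)$.

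The core step is a polynomial bound on the size of any supported solution in a weakly-acyclic setting. I would invoke the standard weak-acyclicity result of~\cite{FaginKMP05}: there exists a polynomial $p$ such that the classical chase of $I$ w.r.t.\ $\STTGD \cup \TTGD$ terminates within at most $p(|I|)$ steps, yielding a universal solution $U$ with $|U| \leq p(|I|)$. Given any supported solution $J$ with ex-choice $\gamma$, the ground rule set $\STTGD^\gamma \cup \TTGD^\gamma$ contains no existential variables, so the set $M^\gamma$ of atoms derivable from $I$ via these ground rules is the least fixed point of a Horn-style immediate-consequence operator. This derivation mirrors the classical chase, with each existential variable replaced by the concrete constant prescribed by $\gamma$ instead of a fresh null; hence $|M^\gamma| \leq |U| \leq p(|I|)$. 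Observing that EGD satisfaction is preserved under taking subsets of an instance, one can argue that $M^\gamma \subseteq J$ already satisfies all dependencies of $\STTGD^\gamma \cup \TTGD^\gamma$, and thus by the minimality clause of Definition~\ref{def:supported-solution} we must have $J = M^\gamma$, giving $|J| \leq p(|I|)$.

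With the polynomial bound in place, the verification is routine. The algorithm guesses $J$ of size at most $p(|I|)$. Since only the values of $\gamma(\rho, \bt')$ for frontier tuples $\bt'$ that extend to some body mapping into $I \cup J$ affect satisfaction of $\STTGD^\gamma$ and $\TTGD^\gamma$, and since there are polynomially many such tuples, the algorithm guesses this finite restriction of $\gamma$, whose witnesses can be drawn from $\adom(I \cup J) \cup \adom(\STTGD \cup \TTGD)$. It then checks in polynomial time that (i) $I \cup J$ satisfies $\STTGD^\gamma$, (ii) $J$ satisfies $\TTGD^\gamma$ (TGDs and EGDs), and (iii) $J$ is minimal, by computing the least fixed point $M^\gamma$ via Datalog-style bottom-up evaluation and testing $M^\gamma = J$. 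Since $Q$ and $\Set$ are fixed and $|J|$ is polynomial, evaluating $\bt \notin Q(J)$ also takes polynomial time. The main obstacle I expect is the polynomial-size argument; the delicate point is justifying that, without loss of generality, $\gamma$ need only draw existential witnesses from a polynomially bounded pool of constants, which follows from weakly-acyclic chase termination combined with the minimality of $J$.
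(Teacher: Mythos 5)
Your overall strategy is the one the paper uses: place the complement in $\NP$ by nondeterministically producing a polynomial-size supported solution $J$ and checking $\bt \notin Q(J)$. The paper realizes the guess as a nondeterministic variant of the semi-oblivious chase that picks concrete constants for existential variables and records its choices via auxiliary $\rel{Chosen}_\rho$ facts; you instead guess $J$ and a finite restriction of $\gamma$ up front and verify by computing the least fixed point $M^\gamma$ of the ground Horn rules $\STTGD^\gamma \cup \TTGD^\gamma$, which is a clean way to certify both satisfaction and minimality (your observation that minimality forces $J$ to equal the target part of $M^\gamma$ is correct). The two presentations are essentially equivalent.

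There is, however, a flawed step in your size bound: the inequality $|M^\gamma| \le |U|$ is false in general. The $\gamma$-grounded derivation does not merely mirror the classical chase with nulls renamed: since $\gamma$ may assign the \emph{same} constant to existential variables arising from different frontier tuples, new joins become available in rule bodies and strictly more atoms can be derived than in the chase, where the corresponding nulls are pairwise distinct. For example, take $I = \{R(a),R(b)\}$, $\STTGD = \{R(x) \rightarrow \exists z\, T(x,z)\}$ and $\TTGD = \{T(x,y), T(x',y) \rightarrow S(x,x')\}$ (weakly acyclic). The chase yields $U = \{T(a,\nul_1), T(b,\nul_2), S(a,a), S(b,b)\}$ with four atoms, whereas a $\gamma$ assigning the same witness $c$ to both frontier tuples yields $M^\gamma \supseteq \{T(a,c),T(b,c),S(a,a),S(a,b),S(b,a),S(b,b)\}$, so $|M^\gamma| > |U|$. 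The polynomial bound you need is nevertheless true, but it must be obtained by applying the rank-based counting argument behind Theorem~3.9 of \cite{FaginKMP05} \emph{directly} to the $\gamma$-grounded derivation: weak acyclicity bounds, position by position along the dependency graph, the number of distinct values that can ever occupy each position (each existential choice is a function of the frontier tuple, of which there are only polynomially many), and with the schema fixed this bounds $|M^\gamma|$ polynomially. This is precisely what the paper's proof appeals to when it asserts that its chase-like procedure halts after polynomially many steps by ``a similar argument'' to that theorem. With that repair, the remainder of your verification (guessing only the relevant restriction of $\gamma$, checking satisfaction of the ground rules and EGDs, testing $J = M^\gamma$ for minimality, and evaluating the fixed query $Q$ on the polynomial-size $J$) goes through.
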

\begin{proof}
	%
	
	We provide a non-deterministic polynomial-time procedure for solving the complement of the problem $\scertprob{\Set,Q}$, when $\Set$ is a weakly-acyclic data exchange setting. That is, given a source instance $I$ of $\Set$ and a tuple $\bt \in \Const^{\ar(Q)}$, the procedure non-deterministically constructs a supported solution $J^*$ of $I$ w.r.t.\ $\Set$ (if one exists), and checks whether $\bt \not \in Q(J^*)$. Let $\Set = \des$, and consider a source instance $I$ of $\Set$, a query $Q$ over $\Trg$, and a tuple $\bt \in \Const^{\ar(Q)}$.
	
	The procedure is defined in two parts. The first part is in charge of non-deterministically constructing a supported solution $J^*$. If the procedure was not able to construct a supported solution (i.e., no such solution exists, or it followed a wrong computation path), the procedure sets $J^* = \perp$. The second part simply verifies whether either $J^*=\perp$, in which case it rejects, or it checks whether $\bt \not \in Q(J^*)$, in which case accepts, otherwise rejects. The second part can be easily implemented by a deterministic polynomial-time procedure; we now show the first procedure constructing $J^*$.
	
	This procedure implements a variation of the so-called semi-oblivious chase algorithm; we refer the reader to~\cite{Marnette09} for more details. In the following, for each TGD $\rho \in \STTGD \cup \TTGD$, let $\rel{Chosen}_\rho$ be a fresh relation, not occurring in $\Src \cup \Trg$, of arity $|\fr(\rho)|$.
	
	\begin{enumerate}[align=parleft,left=0pt..1em]
		\item Let $J_0 = I$, and let the current step be $i = 0$.
		\item If $J_i$ does not satisfy the EGDs in $\TTGD$, then let $J^* = \perp$ and halt;
		\item If $J_i$ satisfies the EGDs in $\TTGD$, and no TGD $\rho \in \STTGD \cup \TTGD$ and homomorphism $h$ from $\body(\rho)$ to $J_i$ exist such that $\rel{Chosen}_\rho(h(\fr(\rho))) \not \in J_i$, then let $J^*$ be $J_i$ after removing all atoms over $\Src$ and the atoms using the $\rel{Chosen}$ predicates, and halt.
		
		\item Otherwise, guess a TGD $\rho_i \in \STTGD \cup \TTGD$ and a homomorphism $h_i$ from $\body(\rho_i)$ to $J_i$ such that $\rel{Chosen}_{\rho_i}(h_i(\fr(\rho_i))) \not \in J_i$, and guess an extension $h_i'$ of $h_i$ such that, for each $z \in \exvar(\rho_i)$, $h_i'(z)=c^i_z$, where either $c^i_z$ is a constant occurring in one of $\Set$, $I$, $Q$, or a fresh new constant. Finally, let $J_{i+1} = J_i \cup h_i'(\head(\rho_i)) \cup \{\rel{Chosen}_{\rho_i}(h_i(\fr(\rho_i))) \}$.
		Let $i := i + 1$ and goto 2.
	\end{enumerate}
	
	To show that the procedure above terminates after a polynomial number of steps, we can use a similar argument to the one given for proving Theorem~3.9 in~\cite{FaginKMP05}. We now show that, for every target instance $J$ of $\Set$, a run of the above procedure halting with $J^* = J$ exists iff $J$ is a supported solution of $I$ w.r.t.\ $\Set$, and the claim will follow. We focus on one of the two directions, as the other direction can be proved in a similar way.
	
	Assume there is a run of $n$ steps of the procedure above with $J^* = J$, for some target instance $J$ of $\Set$, and let $\rho_i$, $h_i$ and $c^i_z$, for $z \in \exvar(\rho_i)$ be the TGD, homomorphism and constants guessed at step $i$ in the run.
	Let $\gamma$ be the \exchoice\ such that, for each $i \in \{1,\ldots,n\}$, $\gamma(\rho_i,h_i(\fr(\rho_i))) = \{(z,c^i_z) \mid z \in \exvar(\rho_i)\}$.
	The fact that $\gamma$ is indeed an \exchoice\ follows from the fact that at each step $i \in \{1,\ldots,n\}$, a constant $c^i_z$ is introduced only if $\rel{Chosen}_{\rho_i}(h_i(\fr(\rho_i))) \not \in J_i$, which in turn implies that no constant has been chosen at some step $j<i$, where $h_j(\fr(\rho_j)) = h_i(\fr(\rho_i))$. By definition of the procedure, $J$ is the instance obtained from $J_n$ where all the atoms with relations in $\Src$ or of the form $\rel{Chosen}_{\rho}$ are removed.
	Hence, by construction, $I \cup J$ satisfies $\STTGD^\gamma$ and $J$ satisfies all the TGDs in $\TTGD^\gamma$. Since $J \neq \perp$, $J$ also satisfies the EGDs in $\TTGD^\gamma$. Moreover, no $J' \subsetneq J$ is such that $I \cup J'$ satisfies $\STTGD^\gamma$ and $J'$ satisfies $\TTGD^\gamma$. If this is the case, let $\alpha \in J \setminus J'$, and let $i \in \{1,\ldots,n\}$ be the step in the above run where $\alpha$ is added in $J_{i+1}$. Then, the TGD $\rho' = h_i(\rho_i) \rightarrow h_i'(\head(\rho_i))$ is in $\STTGD^\gamma \cup \TTGD^\gamma$, by construction of $\gamma$. However $J'$ does not satisfy $\rho'$. The latter, together with the previous discussion implies that $J$ is a supported solution of $I$ w.r.t.\ $\Set$. 
	%
\end{proof}

We point out that the above result is in contrast with all the data exchange semantics discussed in the introduction, for which computing certain answers is undecidable, even for weakly-acyclic settings~\cite{HernichLibkin2011,Hernich2011}.

\medskip
We now move to the lower bound and show that the $\co\NP$ upper bound is tight.

\begin{theorem}\label{thm:scert-conp-lower}
	There exists a weakly-acyclic setting $\Set$ that is TGD-only and a query $Q$ such that $\scertprob{\Set,Q}$ is $\co\NP\hard$.
\end{theorem}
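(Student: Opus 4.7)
The plan is a polynomial-time reduction from \emph{non-3-colorability}, which is classically $\co\NP$-complete, to $\scertprob{\Set,Q}$. I would design a fixed weakly-acyclic, TGD-only setting whose supported solutions correspond exactly to arbitrary vertex-to-color assignments, together with a fixed Boolean query $Q$ that is true on such a solution precisely when the assignment fails to be a proper 3-coloring using three designated ``legal'' colors.

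Concretely, fix $\Src = \{\Vrtx/1, \Edgs/2\}$, $\Trg = \{\HasCol/2, \Edgt/2\}$, three distinct constants $r,g,b \in \Const$, and
\[
\STTGD\ =\ \{\Vrtx(x) \rightarrow \exists z\, \HasCol(x,z),\ \ \Edgs(x,y) \rightarrow \Edgt(x,y)\},\qquad \TTGD\ =\ \emptyset,
\]
which is trivially weakly-acyclic and EGD-free. Take $Q$ to be the Boolean FO query
\[
\exists x,y\, (\HasCol(x,y) \wedge y \neq r \wedge y \neq g \wedge y \neq b) \vee \exists x,x',y\, (\HasCol(x,y) \wedge \HasCol(x',y) \wedge \Edgt(x,x') \wedge x \neq x').
\]
For an input graph $G=(V,E)$ with $V \cap \{r,g,b\} = \emptyset$, let $I_G = \{\Vrtx(v) : v \in V\} \cup \{\Edgs(u,v),\Edgs(v,u) : \{u,v\} \in E\}$; clearly $I_G$ is polynomial-time computable from $G$.

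The core of the argument is a tight characterization of $\ssol{I_G,\Set}$: since the sole existential variable appears in the first TGD with frontier $(x)$, any \exchoice\ $\gamma$ assigns a single constant $c_v$ to each vertex $v$, and the minimality condition of Definition~\ref{def:supported-solution} together with $\TTGD = \emptyset$ forces the corresponding supported solution to be exactly $\{\HasCol(v,c_v) : v \in V\} \cup \{\Edgt(u,v),\Edgt(v,u) : \{u,v\} \in E\}$. So supported solutions are in bijection with arbitrary functions $V \to \Const$, and $Q$ is false on such a solution iff $c_v \in \{r,g,b\}$ for every $v$ and no edge is monochromatic, i.e., iff $v \mapsto c_v$ is a proper 3-coloring of $G$. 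Hence $\langle\rangle \in \scert{I_G}{Q}{\Set}$ iff $G$ admits no proper 3-coloring, which yields $\co\NP$-hardness. The main subtlety to guard against is that, since an \exchoice\ may freely pick any constant of $\Const$, without the first disjunct of $Q$ an assignment using fresh ``junk'' constants (for instance a rainbow assignment of pairwise distinct ad-hoc constants) would spuriously falsify $Q$ and break the reduction; the first disjunct rules these solutions out.
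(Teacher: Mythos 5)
Your proposal is correct and follows essentially the same route as the paper: a reduction from 3-colorability via an existential source-to-target TGD generating $\HasCol$ facts, the observation that supported solutions correspond exactly to arbitrary color assignments, and a two-disjunct Boolean query detecting illegal colors and monochromatic edges. The only difference is cosmetic: the paper checks color legality with a target relation $\Colt$ populated from source facts $\Col(\mathsf{r}),\Col(\mathsf{g}),\Col(\mathsf{b})$ and the literal $\neg\Colt(y)$, whereas you hard-code the three constants into the query via inequalities---both work.
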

\begin{proof}
	The $\co\NP\hard$ness is proved via a reduction from 3-colorability to the complement of our problem. 
	We encode the input graph $G = (V,E)$ as the instance
	\[
	\begin{array}{ll}
		I_G = & \{\Vrtx(u) \mid u \in V\} \cup \{\Edgs(u,v) \mid (u,v) \in E\} \cup \\ 
		& \{\Col(c) \mid c \in \{\mathsf{r},\mathsf{g},\mathsf{b}\}\}.
	\end{array}
	\]
	Colorings are constructed in the setting $\Set = \des$, via the source-to-target TGDs ($\TTGD$ is empty):
	\[
	\begin{array}{ll}
		\rho_1 & = \Col(x) \rightarrow \Colt(x),                    \\
		\rho_2 & = \Edgs(x,y) \rightarrow \Edgt(x,y),                           \\
		\rho_3 & = \Vrtx(x) \rightarrow \exists z\, \HasCol(x,z),
	\end{array}
	\]
	where $\Colt$ collects all colors, $\Edgt$ contains the edges of the graph in the target schema, and $\HasCol$ assigns a term to each node of the graph.
	
	The Boolean query $Q = Q_1 \vee Q_2$ is true over an instance of the target schema iff the instance does not encode a valid 3-coloring. In particular, $Q_1$ checks whether the ``color'' used for some node differs from $\mathsf{r},\mathsf{g}, \mathsf{b}$:
	$$Q_1 = \exists x,y\, \HasCol(x,y) \wedge \neg \Colt(y),$$
	whereas $Q_2$ checks whether the nodes of an edge have the same color:
	\begin{flalign*}
		&& Q_2 = \exists x,y,z\, \Edgt(x,y) \wedge \HasCol(x,z) \wedge \HasCol(y,z). &&
	\end{flalign*}
	
	We prove that $G$ admits a 3-coloring iff $\bt = () \not \in  \scert{I_G}{Q}{\Set}$.
	
	\medskip
	\emph{(Only if direction)} Assume $G$ admits a 3-coloring $\mu$ and consider the instance
	\[
	\begin{array}{ll}
		J=&\{\HasCol(v,\mu(v)) \mid v \in V\} \cup \{\Edgt(u,v) \mid (u,v) \in E\} \cup \{\Colt(c) \mid c \in \{\mathsf{r},\mathsf{g},\mathsf{b}\}\}.
	\end{array}
	\]
	It is not difficult to see that $J$ is a supported solution of $I_G$ w.r.t.\ $\Set$. Clearly, $\bt \not \in Q(J)$ and the claim follows.
	
	\emph{(If direction)} Assume that $G$ does not admit a 3-coloring, and consider an arbitrary supported solution $J$ of $I_G$ w.r.t.\ $\Set$.
	Note that for every edge $(u,v) \in E$, we have that $\Edgt(u,v) \in J$ and $\HasCol(u,c_1),\HasCol(v,c_2) \in J$, for some constants $c_1,c_2$.
	We distinguish two cases. Assume that there is an edge $(u,v) \in E$ such that $c_1 \not \in \{\mathsf{r},\mathsf{g},\mathsf{b}\}$ or $c_2 \not \in \{\mathsf{r},\mathsf{g},\mathsf{b}\}$. Thus, $\bt \in Q_1(J)$ which implies $\bt \in Q(J)$. Assume now that for every edge $(u,v) \in E$, $c_1,c_2 \in \{\mathsf{r},\mathsf{g},\mathsf{b}\}$. Thus, since $G$ does not admit a 3-coloring, for at least one edge $(u,v) \in E$, $c_1 =c_2$. Hence, $\bt \in Q_2(J)$, which implies that $\bt \in Q(J)$ and the claim follows.
\end{proof}

We point out that the query employed in the proof of the above theorem is a simple Boolean combination of CQs. This kind of FO queries have been studied in the context of incomplete databases, e.g., see \cite{GheerbrantL15}. However, differently from the incomplete databases setting, where such queries guarantee query answering in polynomial time, the complexity in our setting is higher, due to the presence of TGDs; the latter is true even for weakly-acyclic TGDs, as shown by Theorem~\ref{thm:scert-conp-lower} above. Similarly, arbitrary FO queries (e.g., involving also universal quantification) behave very differently depending on the given setting. For example, according to Theorem~\ref{thm:scert-conp-upper}, for \emph{any} FO query, supported certain answers remain in $\co\NP$, under weakly-acyclic settings, while for arbitrary settings, the use of universal quantification makes supported certain answering undecidable; the latter is a consequence of the proof of Theorem~\ref{thm:scert-undecidable}. Hence, one cannot directly conclude much on the complexity of supported certain answers by considering the query alone, as done for querying incomplete databases.

\medskip
We conclude this section by recalling that for positive queries, supported certain answers coincide with the classical ones (Theorem~\ref{thm:equiv-positive}), and computing (classical) certain answers for weakly-acyclic settings, under positive queries, is tractable~\cite{FaginKMP05}. Hence, the result below follows.

\begin{corollary}
	For every weakly-acyclic setting $\Set$ and every positive query $Q$, $\scertprob{\Set,Q}$ is in $\PTIME$.
\end{corollary}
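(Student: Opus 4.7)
The plan is to chain together two results already established in the paper together with one classical fact from the data exchange literature. The first ingredient is Theorem~\ref{thm:equiv-positive}, which states that for positive queries, supported certain answers coincide with classical certain answers. So it suffices to give a polynomial-time algorithm for deciding, on input a source instance $I$ and a tuple $\bt$, whether $\bt \in \cert{I}{Q}{\Set}$, where $\Set$ is weakly-acyclic and $Q$ is a positive query.

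The second ingredient is the preceding corollary, which gives us an even more direct route: if $J$ is a classical universal solution of $I$ w.r.t.\ $\Set$, then $\scert{I}{Q}{\Set} = \nonull{Q(J)}$. Therefore the algorithm I would implement is as follows. First, compute a universal solution $J$ of $I$ w.r.t.\ $\Set$ by running the standard chase procedure; by Theorem~3.9 of~\cite{FaginKMP05}, for weakly-acyclic settings this terminates in polynomial time in $|I|$ (since $\Set$ is fixed, the bound on the chase length depends only on $|I|$). Next, evaluate $Q$ over $J$, which is in polynomial time in $|J|$ because $Q$ is a fixed positive (UCQ) query and UCQ evaluation has polynomial data complexity. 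Finally, discard answers containing nulls to obtain $\nonull{Q(J)}$ and check membership of $\bt$.

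Correctness follows immediately: the output set equals $\scert{I}{Q}{\Set}$ by the preceding corollary, so $\bt$ is accepted iff $\bt \in \scert{I}{Q}{\Set}$. Every step runs in polynomial time in $|I|$ under the data complexity assumption that $\Set$ and $Q$ are fixed, yielding the claimed $\PTIME$ bound.

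There is essentially no obstacle here, since all the heavy lifting has been done: Theorem~\ref{thm:equiv-positive} removes the distinction between supported and classical semantics on positive queries, and the chase-plus-evaluation pipeline is the standard way to compute classical certain answers for positive queries in weakly-acyclic settings. The only thing worth double-checking is that the decision version $\scertprob{\Set,Q}$ indeed reduces (in logspace) to computing $\scert{I}{Q}{\Set}$, which is stated right after the definition of $\scertprob{\Set,Q}$ and is standard.
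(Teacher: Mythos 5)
Your proposal is correct and follows exactly the paper's route: it invokes Theorem~\ref{thm:equiv-positive} to reduce supported certain answers to classical certain answers for positive queries, and then appeals to the classical tractability result of~\cite{FaginKMP05} (chase to a universal solution in polynomial time, evaluate, drop nulls). The extra detail you supply about the chase-plus-evaluation pipeline is just an unpacking of the citation the paper relies on.
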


\section{Exact Query Answering via Logic Programming}\label{sec:logic-programming}

In this section, we show how to compute supported certain answers exactly by means of a translation into logic programming under the stable model semantics, i.e., \emph{Answer Set Programming} (ASP). First, we need to recall the syntax and semantics of logic programs. In particular, we focus on a fragment of logic programs that is enough for our purposes, which is Datalog  with (possibly non-stratified) negation, which means we do not allow for function symbols or disjunctive rules.

\smallskip
\noindent
\textbf{Syntax.} A \emph{literal} $L$ is an expression of the form $\alpha$ or $\neg \alpha$, where $\alpha$ is either an atom without nulls, or the expression $t_1 = t_2$, where $t_1,t_2$ are variables or constants; we write $t_1 \neq t_2$ for $\neg t_1 = t_2$. We say that $L$ is \emph{positive} (resp., \emph{negative}) if $L = \alpha$ (resp., $L = \neg \alpha$).
If a literal contains no variables, it is said to be \emph{ground}.

A \emph{\lprule} $r$ is an expression of the form
$$ H \lprulesep A_1,\ldots,A_n, \neg B_1,\ldots,\neg B_m .$$
with  $n \ge 0$, $m \ge 0$, and where $H$ is either a positive literal or the symbol $\perp$, $A_1,\ldots,A_n$ are positive literals, and $\neg B_1,\ldots,\neg B_m$ are negative literals.
We denote $\head(r) = \{H\}$ as the \emph{head} of $r$, while $\body(r) = \{A_1,\ldots,A_n, \neg B_1,\ldots,\neg B_m\}$ is the \emph{body} of $r$; we use $\bodyp(r)$ to denote $\{A_1,\ldots,A_n\}$, and $\bodyn(r)$ to denote $\{B_1,\ldots,B_m\}$. 
If $H = \perp$, we say that $r$ is a \emph{constraint}. If $m=0$, we say the \lprule is \emph{positive}; if $r$ contains no variables, it is said to be \emph{ground}.
We say the \lprule $r$ is \emph{safe} if every variable in the rule occurs in some literal of $\bodyp(r)$. 
We will require every \lprule to be safe (besides being a common requirement, safe \lprules suffice for our purposes).
%

As customary, we will consider two kinds of sets of \lprules: 
\begin{enumerate}
\item finite sets of \lprules of the form $H \lprulesep$, with $H \neq \bot$ (notice that such \lprules must be ground because of safety), which are commonly used to  represent databases---a set of this kind will be called an \emph{\extdb};
\item finite sets of \lprules of any other form---a set of this kind will be called a \emph{\program}.
\end{enumerate}

\smallskip
\noindent
\textbf{Semantics.} Let $\P$ be a \program and $\EDB$ an \extdb.
We will often use $\PP$ to denote the set $\P \cup \EDB$.  
 The \emph{Herbrand universe} of $\PP$, denoted $\const(\PP)$, is the set of all constants occurring in $\PP$. The \emph{Herbrand base} of $\PP$, denoted $\base{\PP}$, is the set of all atoms that can be built using relations and constants occurring in $\PP$.
A \emph{ground version} of a \lprule $r \in \PP$ is a ground \lprule $r'$ that can be obtained from $r$ by replacing all occurrences of each variable $x$ of $r$ with some constant from $\const(\PP)$.

The \emph{grounding} of $\PP$, denoted $\ground{\PP}$, is the set of \lprules obtained from $\PP$ by replacing each \lprule $r \in \PP$ with all its ground versions.

We say that an instance $I$ \emph{satisfies} a ground positive literal $L$ if either $L$ is of the form $\alpha = \beta$ and $\alpha$ and $\beta$ are the same constant, or $L$ is an atom occurring in $I$. Furthermore, we say that $I$ satisfies a ground negative literal $\neg L$, if $I$ does not satisfy $L$. Finally, $I$ satisfies a set of ground literals if $I$ satisfies each literal in it.

Consider a rule $r \in \ground{\PP}$ and an instance $I$. We say that $I$ \emph{satisfies} $r$ if, either $r$ is a constraint and $I$ does not satisfy $\body(r)$, or $I$ satisfies $\body(r)$ implies that $I$ satisfies $\head(r)$ (notice that an empty body is always satisfied).

A \emph{model} of $\PP$ is an instance $M$ such that $M \subseteq \base{\PP}$ and such that $M$ satisfies each rule of $\ground{\PP}$. We say that $M$ is \emph{minimal} if there is no other model $M'$ of $\P$ such that $M' \subsetneq M$. We use $\MM{\PP}$ to denote the set of all minimal models of $\P$.

The \emph{reduct} of $\PP$ w.r.t.\ some instance $I$ is the set of ground \lprules obtained from $\ground{\PP}$ by removing each \lprule $r$ for which $I$ does not satisfy $\bodyn(r)$, and by removing all negative literals from the body of each \lprule $r$ for which $I$ satisfies $\bodyn(r)$.

An instance $M$ is a \emph{stable model} of $\PP$ if $M \in \MM{\PP'}$, where $\PP'$ is the reduct of $\PP$ w.r.t.\ $M$. We use $\SM{\PP}$ to denote the set of all stable models of $\PP$.

\smallskip
\noindent
\textbf{Cautious Reasoning.} Consider an \extdb $\EDB$, a \program $\P$, and a query $Q$. The \emph{cautious answers} to $Q$ over $\EDB$ and $\P$ is the set:
$$ \cans{\EDB}{Q}{\P} = \bigcap_{M \in \SM{\P_{\EDB}}} Q(M).$$


%
%

The key task we are interested in, regarding logic programs, is computing cautious answers. In particular, we are interested in its data complexity, i.e., when the program and the query are fixed; as usual, we focus on the decision version of the problem. In the following, $\P$ and $Q$ denote some program and some query, respectively:
\begin{center}
	
	\framebox[12cm]{\begin{tabular}{ll}
			{\small PROBLEM} :  & $\cansprob{\P,Q}$
			\\
			{\small INPUT} :    & An \extdb $\EDB$ and a tuple $\bt \in \Const^{\ar(Q)}$. \\
			{\small QUESTION} : & Is $\bt \in \cans{\EDB}{Q}{\P}$?
	\end{tabular}}
\end{center}

It is well known that for every program $\P$ and every query $Q$, $\cansprob{\P,Q}$ is in $\co\NP$---e.g., see \cite{GrecoSZ95}. 


\smallskip
\noindent
\textbf{The choice construct.}
We now extend logic programs with an additional construct, called \emph{choice}. We point out that extending logic programs with the choice is purely for syntactic convenience, as this construct can be implemented by means of standard {\lprule}s with negation.

The choice construct has been introduced in Datalog in \cite{SaccaZ90}, studied in \cite{GiannottiPSZ91,GrecoSZ95,GrecoZ98,GrecoZG92}, and implemented in the Datalog systems LDL++ \cite{ArniOTWZ03} and, in some form, in recent ASP systems (e.g., Potassco \cite{GebserKKOSS11} and DLV \cite{AlvianoFLPPT10}). It is used to enforce functional dependency (FD) constraints on rules of a logic program. 

A \emph{choice \lprule} $r$ is an expression of the form
$$ H \lprulesep A_1,\ldots,A_n, \neg B_1,\ldots,\neg B_m,\choice((X),(Y)).$$
where $n$, $m$, $H$, $A_1,\ldots,A_n$, and $B_1,\ldots,B_m$ are all defined as for standard {\lprule}s, while
 $X$ and $Y$ denote \emph{disjoint} sets of variables occurring in $\body(r)$.\footnote{When $X$ (resp., $Y$) is a singleton, we may use its only element in place of $X$ (resp., $Y$).}
The original definition of choice rule allows for multiple choice constructs in the rule body; here we focus on choice rules with only one choice construct in the body as this is enough for our purposes.

Intuitively, the construct $\choice((X),(Y))$ prescribes that the set of all consequences derived from $r$ must respect the functional dependency $X \rightarrow Y$. 


The formal semantics of choice rules is given in terms of a translation to standard {\lprule}s using negation. In particular, the choice \lprule $r$ defined above is a shorthand for writing the following set of {\lprule}s; in what follows, $\bx$ and $\by$ are the tuples  of all variables in $X$ and $Y$, respectively, in some arbitrary order.

\[
\begin{array}{ll}
	r^{(1)}: & \rel{Range}_r(\by) \lprulesep A_1,\ldots,A_n, \neg B_1,\ldots,\neg B_m. \\
	r^{(2)}: & H \lprulesep A_1,\ldots,A_n, \neg B_1,\ldots,\neg B_m, \rel{Chosen}_r(\bx, \by). \\
	r^{(3)} : & \rel{Chosen}_r(\bx, \by) \lprulesep A_1,\ldots,A_n, \neg B_1,\ldots,\neg B_m, \neg \rel{DiffChoice}_r(\bx, \by). \\
	r^{(4)}_i : & \rel{DiffChoice}_r(\bx, \by) \lprulesep \rel{Chosen}_r(\bx, \bw), \rel{Range}_r(\by), \by[i] \neq \bw[i],\ \forall i \in \{1,\ldots,|Y|\}. 
\end{array}
\]
In the above {\lprule}s, $\rel{Range}_r$, $\rel{Chosen}_r$, and $\rel{DiffChoice}_r$ are fresh relations not occurring in $\P$, which are used only to rewrite the \lprule $r$.

\subsection*{Implementing Supported Certain Answers via Logic Programming with Choice}

The goal of this section is to prove the following key result.

%

\begin{theorem}\label{thm:lp-reduction}
	For every weakly-acyclic data exchange setting $\Set = \des$, and every query $Q$ over $\Trg$, there exists a program $\P$ such that $\scertprob{\Set,Q}$ reduces to $\cansprob{\P,Q}$ in polynomial time.
\end{theorem}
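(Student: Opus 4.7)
The plan is to construct, given a weakly-acyclic setting $\Set = \des$ and a query $Q$ over $\Trg$, a program $\P$ such that, for every source instance $I$ and candidate tuple $\bt$, the \extdb $\EDB_I$ built from $I$ yields $\bt \in \scert{I}{Q}{\Set}$ if and only if $\bt \in \cans{\EDB_I}{Q}{\P}$. The core idea is to make the stable models of $\P \cup \EDB_I$ correspond, modulo renaming of fresh constants, to the supported solutions of $I$ w.r.t.\ $\Set$; this mirrors the non-deterministic chase used in the proof of Theorem~\ref{thm:scert-conp-upper}.

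First, weak-acyclicity yields a polynomial bound $p(\cdot)$ on the size of the active domain of any supported solution of $I$, derivable from the chase-termination argument of~\cite{FaginKMP05}. The \extdb $\EDB_I$ therefore contains the facts of $I$, unary facts $\rel{Dom}(c)$ for every constant $c$ appearing in $I$ or $\Set$, and additional facts $\rel{Dom}(c_1),\ldots,\rel{Dom}(c_{p(|I|)})$ for fresh constants $c_1,\ldots,c_{p(|I|)}$ not occurring in $I$, $Q$, or $\Set$. Each TGD $\rho = \varphi(\x,\y) \rightarrow \exists \z\, \psi(\y,\z) \in \STTGD \cup \TTGD$ is then translated into a choice rule whose body is $\varphi(\x,\y)$ enriched with $\rel{Dom}(z)$ for each $z \in \z$, whose head derives the atoms of $\psi(\y,\z)$, and which carries the construct $\choice((\y),(\z))$ so that the assignment of $\z$ is a function of $\y$, realising precisely the ex-choice $\gamma$. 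Each EGD $\eta = \varphi(\x) \rightarrow x = y \in \TTGD$ becomes a constraint $\perp \lprulesep \varphi(\x), x \neq y$, ruling out stable models that would violate it.

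The main obstacle will be establishing the correspondence between the stable models of $\P \cup \EDB_I$, restricted to $\Trg$-atoms, and the supported solutions of $I$ w.r.t.\ $\Set$. In one direction, each stable model $M$ induces an ex-choice by reading off the $\rel{Chosen}_\rho$-atoms produced by the choice construct, after which stable-model minimality translates directly into the subset-minimality clause of Definition~\ref{def:supported-solution}. In the reverse direction, a supported solution $J$ with witnessing ex-choice $\gamma$ may assign existential variables to constants lying outside the pool $c_1,\ldots,c_{p(|I|)}$; here one uses the polynomial bound together with a renaming/symmetry argument to produce an equivalent supported solution, and hence a corresponding stable model, that uses only our fixed pool. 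The same symmetry ensures that tuples $\bt$ whose constants lie outside $I$, $Q$, and $\Set$ behave consistently under both semantics (in particular they cannot be certain answers unless their symmetric analogues are also answers everywhere). Combining the two directions yields $\bt \in \scert{I}{Q}{\Set}$ iff $\bt \in \cans{\EDB_I}{Q}{\P}$, and since $\EDB_I$ is clearly constructible from $I$ in polynomial time, the desired reduction follows.
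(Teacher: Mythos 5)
Your construction is essentially the paper's: a polynomial domain bound extracted from weak acyclicity (the paper isolates this as Lemma~\ref{lem:finite-query}), a $\rel{Dom}$ pool of existing plus fresh constants, one choice rule $\choice((\y),(\z))$ per existential TGD guarded by $\rel{Dom}$ atoms, constraints for EGDs, and a two-way correspondence between stable models restricted to $\Trg$ and domain-restricted supported solutions, with stable-model minimality matching the minimality clause of Definition~\ref{def:supported-solution}.

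There is, however, one concrete slip: your $\rel{Dom}$ pool contains the constants of $I$ and $\Set$ plus the fresh constants, but omits the constants occurring in $Q$ (and in the candidate tuple $\bt$). The paper's set $S$ in Lemma~\ref{lem:finite-query} deliberately includes the constants of $Q$, and rule~(\ref{eq:S}) adds $\rel{Dom}(c)$ for all of $S$. This matters for non-positive queries: a supported solution witnessing $\bt \notin Q(J)$ may have to assign an existential variable to a constant that appears only in $Q$, and such a constant cannot be replaced by a fresh one without changing the truth of $Q$. For instance, take $\STTGD = \{P(x) \rightarrow \exists z\, R(z)\}$, $\TTGD = \emptyset$, $I = \{P(b)\}$, and the Boolean query $Q = \neg R(a)$ with $a \neq b$. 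Then $\{R(a)\}$ is a supported solution on which $Q$ fails, so the empty tuple is not a supported certain answer; but with $a$ excluded from $\rel{Dom}$, every stable model has the form $\{R(c)\}$ with $c \neq a$, so $Q$ holds cautiously and your reduction gives the wrong answer. The fix is exactly the paper's: put the constants of $Q$ (and $\bt$) into the $\rel{Dom}$ facts. Your closing symmetry remark about tuples with constants outside $I$, $Q$, $\Set$ does not rescue this, since the failure above already occurs for the Boolean case. With that one correction, the argument goes through as in the paper.
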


The rest of this section is devoted to prove the above claim. In particular, we show how to convert a weakly-acyclic data exchange setting $\Set = \des$, together with a source instance $I$ of $\Set$ and a query $Q$ over $\Trg$, into an \extdb $\EDB$ and a \program $\P$ using choice rules, in such a way that $\P$ depends only on $\Set$ and such that $\scert{I}{Q}{\Set} = \cans{\EDB}{Q}{\P}$.


\medskip
The main idea of the translation is to derive a program together with an \extdb such that the stable models correspond to a subset of the supported solutions that is enough for computing supported certain answers. For this, we rely on the following useful result that one can extract from the proof of Theorem~\ref{thm:scert-conp-upper}.
For a set $S$ of terms and a set of instances $\mathcal{I}$, we use $\restr{\mathcal{I}}{S}$ to denote the set of instances $\{I \in \mathcal{I} \mid \adom(I) \subseteq S\}$.

\begin{lemma}\label{lem:finite-query}
	Consider a weakly-acyclic data exchange setting $\Set = \des$. There exists a polynomial $\mathsf{pol}$ such that, for every source instance $I$ of $\Set$, and every query $Q$ over $\Trg$, the following holds:
	$$ \scert{I}{Q}{\Set} = \bigcap\limits_{J \in \restr{\ssol{I,\Set}}{S}} Q(J),$$
	where $S$ is the set of all constants occurring in $\Set$, $I$ and $Q$, plus some fixed, arbitrarily chosen constants $c_1,\ldots,c_{\mathsf{pol}(|I|)}$ not occurring anywhere in $\Set$, $I$, or $Q$.
\end{lemma}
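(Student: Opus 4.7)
\medskip

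\noindent
\textbf{Proof plan.} The inclusion $\subseteq$ is immediate, since $\restr{\ssol{I,\Set}}{S} \subseteq \ssol{I,\Set}$, so intersecting over the smaller family can only enlarge the answer set. The real content is the reverse inclusion, which I would prove contrapositively: assuming $\bt \notin \scert{I}{Q}{\Set}$, I would exhibit some $J^\star \in \restr{\ssol{I,\Set}}{S}$ with $\bt \notin Q(J^\star)$.

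\medskip

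\noindent
\textbf{Step 1: A polynomial size bound for supported solutions.} By hypothesis there is some $J \in \ssol{I,\Set}$ with $\bt \notin Q(J)$. From the proof of Theorem~\ref{thm:scert-conp-upper} (and the standard weak-acyclicity analysis of \cite{FaginKMP05}), every such supported solution can be built by a run of the nondeterministic chase-like procedure, and every such run halts after a number of steps bounded by a fixed polynomial in $|I|$, introducing at most one new constant per step. Thus there is a polynomial $\mathsf{pol}$ (depending only on $\Set$) such that $|\adom(J) \setminus (\adom(\Set) \cup \adom(I) \cup \adom(Q))| \leq \mathsf{pol}(|I|)$ for every $J \in \ssol{I,\Set}$. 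I would fix this $\mathsf{pol}$ as the polynomial claimed by the lemma, and fix once and for all $\mathsf{pol}(|I|)$ spare constants $c_1,\dots,c_{\mathsf{pol}(|I|)}$ outside $\adom(\Set) \cup \adom(I) \cup \adom(Q)$, obtaining the set $S$.

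\medskip

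\noindent
\textbf{Step 2: Renaming via a bijection that fixes the relevant constants.} Let $C_0 = \adom(\Set) \cup \adom(I) \cup \adom(Q)$, and let $D = \adom(J) \setminus C_0$. By Step~1, $|D| \leq \mathsf{pol}(|I|)$, so I can choose an injection from $D$ into $\{c_1,\dots,c_{\mathsf{pol}(|I|)}\}$ and extend it to a bijection $\pi \colon \Const \to \Const$ that is the identity on $C_0$ and maps $\Const \setminus C_0$ to itself. Set $J^\star = \pi(J)$; then $\adom(J^\star) \subseteq S$ by construction. The key verification is that $J^\star \in \ssol{I,\Set}$: given the \exchoice\ $\gamma$ witnessing $J$ as a supported solution, I would define $\gamma'(\rho,\bt') = \{(z,\pi(c)) \mid (z,c) \in \gamma(\rho,\pi^{-1}(\bt'))\}$ for every TGD $\rho$ and every tuple $\bt'$ of constants. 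Since $\pi$ is the identity on the constants occurring in $\STTGD$ and $\TTGD$ and on $\adom(I)$, a direct check shows that $I \cup J^\star$ satisfies $\STTGD^{\gamma'}$ and $J^\star$ satisfies $\TTGD^{\gamma'}$, and that minimality is transported by $\pi^{-1}$: any strict subset $J'' \subsetneq J^\star$ violating some dependency would pull back via $\pi^{-1}$ to a strict subset of $J$ violating the corresponding dependency in $\STTGD^\gamma \cup \TTGD^\gamma$, contradicting that $J$ is supported.

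\medskip

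\noindent
\textbf{Step 3: Preservation of the query answer.} It remains to show $\bt \notin Q(J^\star)$. Since $\pi$ is a bijection on $\Const$ fixing $\adom(Q)$ and fixing every component of $\bt$ (because $\bt \in \Const^{\ar(Q)}$ and the components of $\bt$ lie in $C_0$ — this is part of what I would enforce when choosing $C_0$; if $\bt$ contains constants outside $C_0$, the statement $\bt \notin \scert{I}{Q}{\Set}$ is trivially vacuous given the active-domain semantics), the standard invariance of FO evaluation under constant renamings gives $\bt \in Q(J)$ iff $\pi(\bt) \in Q(\pi(J))$, i.e., $\bt \in Q(J^\star)$. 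Hence $\bt \notin Q(J^\star)$, completing the $\supseteq$ direction.

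\medskip

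\noindent
\textbf{Main obstacle.} The nontrivial part is Step~2: showing that the renaming $\pi$ actually transports supported solutions to supported solutions. The subtle point is the minimality clause in Definition~\ref{def:supported-solution}, since the \exchoice\ $\gamma$ is a function on \emph{all} constants, not just those appearing in $J$; one has to be careful that the ``witness'' TGDs in $\STTGD^\gamma \cup \TTGD^\gamma$ that force each fact of $J$ survive after renaming — this is exactly why $\gamma'$ is defined by conjugation with $\pi$ rather than by some ad hoc modification. The polynomial bound in Step~1 is inherited directly from the weak-acyclicity argument already used for Theorem~\ref{thm:scert-conp-upper}, so no new technical work is needed there.
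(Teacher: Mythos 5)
Your proposal follows essentially the same route as the paper: the paper's proof is a two-line appeal to the non-deterministic procedure from the proof of Theorem~\ref{thm:scert-conp-upper} (polynomially many steps, hence polynomially many fresh constants per supported solution, and halting runs correspond exactly to supported solutions), and your Steps 1--3 simply make explicit the renaming argument that this appeal leaves implicit, which is the right way to turn the sketch into a proof. The one soft spot is the parenthetical in your Step 3: when a component of $\bt$ is one of the auxiliary constants $c_i$ --- which lies outside $C_0$ but \emph{inside} $S$ --- it is true that $\bt \notin \scert{I}{Q}{\Set}$, but you still owe a witness $J^\star \in \restr{\ssol{I,\Set}}{S}$ with $\bt \notin Q(J^\star)$, and your $\pi$ need not fix $c_i$; this is repaired either by giving $\mathsf{pol}$ one constant of slack or by a symmetry argument permuting the $c_j$'s (the paper's own sketch is silent on this point as well).
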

\begin{proof}
	The claim easily follows by construction of the non-deterministic procedure building the instance $J^*$ in the proof of Theorem~\ref{thm:scert-conp-upper}, from the fact that it terminates after a polynomial number of steps w.r.t.\ $I$, and the fact that it halts with $J^* \neq \perp$ iff $J^*$ is a supported solution in $\ssol{I,\Set}$.
\end{proof}

The result above tells us that considering supported solutions of a certain polynomial size suffices for computing supported certain answers. 
The stable models of the program together with the \extdb we are going to define will correspond to such supported solutions.

%
%

\begin{definition}[Translation]\label{def:translation}
	Consider a data exchange setting $\Set = \des$,  a source instance $I$ of $\Set$, a query $Q$ over $\Trg$, and the set of constants $S$ as defined in Lemma~\ref{lem:finite-query} w.r.t.\ $\Set$, $I$ and $Q$.
	
	\smallskip
	We use $\LP{\Set}$ to denote the set consisting of the following {\lprule}s.
	\begin{enumerate}[align=parleft,left=0pt..1em]
		
		\item For each TGD $\rho$ of the form $\alpha_1 \wedge \cdots \wedge \alpha_n \rightarrow \exists \bz\, \beta_1 \wedge \cdots \wedge \beta_m$ in $\STTGD \cup \TTGD$, with $\by = \fr(\rho)$, if $k = |\bz| = 0$, the following \lprules are introduced:
\begin{equation}\label{eq:tgd1}
\beta_i \lprulesep \alpha_1,\ldots,\alpha_n,\ \ \ \ i \in \{1,\ldots,m\}, 
\end{equation}
		otherwise, the following \lprules are introduced:
\begin{equation}\label{eq:tgd2a}
\rel{ExChoice}_\rho(\by,\bz) \lprulesep \alpha_1,\ldots,\alpha_n, \rel{Dom}(\bz[1]),\ldots,\rel{Dom}(\bz[k]),\choice((Y),(Z)),   
\end{equation}
\begin{equation}\label{eq:tgd2b}
\beta_i \lprulesep \rel{ExChoice}_\rho(\by,\bz),  \ \ \ \ i \in \{1,\ldots,m\}, 
\end{equation}
		
		where $Y$ and $Z$ are the sets of all variables in $\by$ and $\bz$, respectively, and $\rel{Dom}$ is a fresh predicate.
		
		\item For each EGD $\alpha_1 \wedge \cdots \wedge \alpha_n \rightarrow x = y$ in $\TTGD$, the following constraint is introduced:
\begin{equation}\label{eq:egd}
\perp \lprulesep \alpha_1,\ldots,\alpha_n, x \neq y	
\end{equation}

\end{enumerate}
	
\smallskip
We use $\ED{\Set,I,Q}$ to denote the extensional database consisting of the following {\lprule}s.
	\begin{enumerate}[align=parleft,left=0pt..1em]
		\item For each constant $c \in S$, the following \lprule is introduced:
		\begin{equation}\label{eq:S}
			\rel{Dom}(c) \lprulesep .
		\end{equation}
		
		\item For each fact $\alpha \in I$, the following \lprule is introduced:
			\begin{equation}\label{eq:fact}
				\alpha \lprulesep .
			\end{equation}
	\end{enumerate}
\end{definition}

\begin{example}
Considering the data exchange setting $\Set$ and the source instance $I$ of $\Set$ from Example~\ref{ex:employee}, we have that $\LP{\Set}$ is the following logic program:
\[
\begin{array}{l}
\rel{ExChoice}_{\rho_1}(x,z) \lprulesep \Emp(x),\ \rel{Dom}(z),\choice((x),(z)).  \\
\EmpNew(x,z) \lprulesep \rel{ExChoice}_{\rho_1}(x,z).  \\\ \\
\EmpNew(x,y) \lprulesep \EmpC(x,y). \\
\SameC(x,x') \lprulesep \EmpNew(x,y),\ \EmpNew(x',y). \\
\perp \lprulesep \EmpNew(x,y),\ \EmpNew(x,z),\  y \neq z.\\
\end{array}
\]	
Intuitively, the choice rule associated to the TGD $\rho_1$ is in charge of non-deterministically assigning a certain value to the existential variables of $\rho_1$, for each value its frontier variables can take, i.e., the choice rule essentially builds an \exchoice\ for $\rho_1$. Once the \exchoice\ is constructed, the rule $\EmpNew(x,z) \lprulesep \rel{ExChoice}_{\rho_1}(x,z)$ simply propagates these choices to the head of $\rho_1$, as needed. All other TGDs have no existential quantification, and so use no choice construct. Finally, the only EGD $\eta$ is converted to a constraint, so that the stable models of the logic program satisfy $\eta$.
\end{example}

We are now ready to prove Theorem~\ref{thm:lp-reduction}.\\

\textbf{Proof of Theorem~\ref{thm:lp-reduction}.}
Given an instance $I$ over a schema $\Sch$, and a schema $\Sch' \subseteq \Sch$, we use $I[\Sch']$ to denote the restriction of $I$ to only its facts referring to relations in $\Sch'$.
Notice that for every query $Q$ over $\Sch'$, the following holds: $Q(I)=Q(I[\Sch'])$.

Consider a data exchange setting $\Set = \des$,  a source instance $I$ of $\Set$, a query $Q$ over $\Trg$, and the set of constants $S$ as defined in Lemma~\ref{lem:finite-query} w.r.t.\ $\Set$, $I$, and $Q$.

Let $\P=\LP{\Set}$ and $\EDB=\ED{\Set,I,Q}$.

We want to show 
$ \cans{\EDB}{Q}{\P} = \scert{I}{Q}{\Set}.$
Leveraging Lemma~\ref{lem:finite-query},
we show that
$\{M[\Trg] \mid M \in \SM{\PP}\} = \restr{\ssol{I,\Set}}{S}.$


\medskip
(1) In the following, we show $\{M[\Trg] \mid M \in \SM{\PP}\} \subseteq \restr{\ssol{I,\Set}}{S}.$
Let $X \in \{M[\Trg] \mid M \in \SM{\PP}\}$ and $M$ be a stable model in $\SM{\PP}$ such that $X=M[\Trg]$.

Let $\gamma$ be an \exchoice\  defined as follows: given a TGD $\rho = \varphi(\x,\y) \rightarrow \exists \z\, \psi(\y,\z)$ 
in $\STTGD\cup \TTGD$ and a tuple $\bt \in \Const^{|\y|}$, $\gamma$ returns a set $\gamma(\rho,\bt)$ of pairs  of the form $(z_i,c)$, one for each existential variable $z_i \in \z$, where $c$ 
is defined as follows:
if $\rel{ExChoice}_\rho(\bt,c_1,\dots,c_k)\in M$, then $c=c_i$, otherwise $c$ is an arbitrary constant of $S$.

It is easy to see that $\const(\PP)=S$, and thus $X$ contains only constants in $S$.
%
Moreover, $I \cup X$  satisfies $\STTGD^\gamma$, because otherwise $M$ would not satisfy some ground version of the \lprules derived from the TGDs in $\STTGD^\gamma$.
Also, $X$ satisfies $\TTGD^\gamma$, because otherwise $M$ would not satisfy some ground version of the \lprules derived from the TGDs/EGDs in $\TTGD^\gamma$.

Since every stable model is also a minimal model, the minimality of $M$ ensures that there is no  $J' \subsetneq X$  such that $I \cup J'$ satisfies $\STTGD^\gamma$ and $J'$ satisfies $\TTGD^\gamma$.
Thus, $X$ is a supported solution of $I$ w.r.t.\ $\Set$ containing only constants in $S$.


\medskip
(2) We now show $\{M[\Trg] \mid M \in \SM{\PP}\} \supseteq \restr{\ssol{I,\Set}}{S}.$
Let $J \in \restr{\ssol{I,\Set}}{S}$ and $\gamma$ be the \exchoice\ for which $I \cup J$ satisfies $\STTGD^\gamma$ and $J$ satisfies $\TTGD^\gamma$.
Let $X = I \cup J \cup \{\rel{Dom}(c) \mid c \in S\}$.
We show that $X \in \SM{\PP}$. 

First, $X$ satisfies each ground rule in $\ground{\P}$ of the form $\beta_i \lprulesep \alpha_1,\ldots,\alpha_n$ (cf.~(\ref{eq:tgd1}) in Definition~\ref{def:translation}), because otherwise the TGD of the form $\alpha_1 \wedge \cdots \wedge \alpha_n \rightarrow  \beta_1 \wedge \cdots \wedge \beta_m$ in $\STTGD^\gamma$ or $\TTGD^\gamma$ would not be satisfied by $I \cup J$ or $J$, respectively.

Also, $X$ satisfies the ground rules in $\ground{\P}$ of the form~(\ref{eq:tgd2a})--(\ref{eq:tgd2b}) in Definition~\ref{def:translation}, derived from a TGD having existential variables, because otherwise such a TGD would not be satisfied by either $I \cup J$ or $J$, or $J$ would not be minimal.

Further, $X$ satisfies each ground constraint of the form $\perp \lprulesep \alpha_1,\ldots,\alpha_n, x \neq y$ (cf. (\ref{eq:egd}) in Definition~\ref{def:translation}) in $\ground{\P}$ as otherwise $J$ would not satisfy the EGD $\alpha_1 \wedge \cdots \wedge \alpha_n \rightarrow x = y$ in $\TTGD^\gamma$.

Then, $X$ satisfies each rule in $\EDB$ of the form~(\ref{eq:S}) of Definition~\ref{def:translation} because $\{\rel{Dom}(c) \mid c \in S\} \subseteq X$.

Finally, $X$ satisfies each rule in $\EDB$ of the form~(\ref{eq:fact}) of Definition~\ref{def:translation} because $X$ contains $I$.

By the minimality of $J$ we obtain the minimality of $X$, and thus, $X$ is a stable model of $\PP$.
Noting that $X[\Trg] =J$, we conclude that $J\in \{M[\Trg] \mid M \in \SM{\PP}\}$.\qquad$\Box$

\section{Approximate Query Answering via Materialization}\label{sec:univ-solutions}
As already discussed in the introduction, there might exist scenarios where it is desirable to materialize a target instance starting from the source instance and the schema mapping, in such a way that supported certain query answers can be computed by considering the target instance alone. The goal of this section is thus to study the problem of materializing such an instance, when focusing on our notion of supported solutions.

%
%

It would be very useful if such a special target instance could be computed in polynomial-time, already for weakly-acyclic settings. However, due to Theorem~\ref{thm:scert-conp-lower}, this would imply $\PTIME=\co\NP$. Hence, we need something different.


We introduce a special instance that enjoys the following properties: the answers over this instance are an approximation (i.e., a subset) of the supported certain answers for general queries, but they coincide with supported certain answers for positive queries. We also show that we can compute such an instance in polynomial time for weakly-acyclic settings.

Our approach relies on conditional instances~\cite{ImielinskiL84}, which we introduce in the following.

\smallskip
\noindent
\textbf{Conditional instances.}
%
%
A \emph{valuation} $\nu$ is a mapping from $\Const \cup \Null$ to $\Const$ that is the identity on $\Const$.
A \emph{condition} $\phi$ is an expression that can be built using the standard logical
connectives $\wedge$, $\vee$, $\neg$, $\Rightarrow$, and expressions of the form
$t = u$, where $t,u \in \Const \cup \Null$. We will also use $t \neq u$ as a shorthand for $\neg (t = u)$. 
We write $\nu \models \phi$ to state that $\nu$ satisfies $\phi$,
%
and $\phi \models \psi$ if all valuations satisfying $\phi$ satisfy the condition $\psi$.
%
A \emph{conditional fact} is a pair $\catom{\alpha,\phi}$, where $\alpha$ is a fact and $\phi$ is a condition.
%
%
A \emph{conditional instance} $\cinst{I}$ is a finite set of conditional facts. We also denote $\cinst{I}[1] = \{\alpha \mid \catom{\alpha,\phi} \in \cinst{I}\}$. A \emph{possible world} of a conditional instance $\cinst{I}$ is an instance $I$ such that there exists a valuation $\nu$ with $I = \{ \nu(\alpha) \mid \< \alpha, \phi\> \in \cinst{I} \text{ and } \nu \models \phi\}$. We use $\pw{\cinst{I}}$ to denote the set of all possible worlds of $\cinst{I}$.

\begin{definition}
Consider a conditional instance $\cinst{I}$ and a query $Q$. The \emph{conditional certain answers} of $Q$ over $\cinst{I}$ is the set $\ccert{\cinst{I}}{Q} = \bigcap_{J \in \pw{\cinst{I}}} Q(J)$.
\end{definition}
We are now ready to introduce our main tool.

\begin{definition}[Approximate Conditional Solution]\label{def:univ-cond-sol}
	Consider a data exchange setting $\Set$ and a source instance $I$ of $\Set$. A conditional instance $\cinst{J}$ is an \emph{approximate conditional solution} of $I$ w.r.t.\ $\Set$, if for every query $Q$:
	\begin{enumerate}[align=parleft,left=0pt..1em]
		\item $\ssol{I,\Set} \subseteq \pw{\cinst{J}}$, and thus $\ccert{\cinst{J}}{Q} \subseteq \scert{I}{Q}{\Set}$, and 
		\item if $Q$ is positive, $\ccert{\cinst{J}}{Q} = \scert{I}{Q}{\Set}$.
	\end{enumerate}
\end{definition}

That is, an approximate conditional solution is a conditional instance that allows to compute approximate answers for general queries, and exact answers for positive queries.


It is easy to observe that there are settings $\Set = \des$ and source instances $I$ for which an approximate conditional solution might not exist, even if $\Set$ is weakly-acyclic. This is due to the presence of EGDs in $\TTGD$. 

However, for weakly-acyclic settings without EGDs, an approximate conditional solution always exists, and we present a polynomial-time algorithm that is able to construct one.
We show how to deal with general weakly-acyclic settings with EGDs in Section~\ref{sec:egds}.

The algorithm is a variation of the well-known chase algorithm, which iteratively introduces new facts, starting from a source instance, whenever a TGD is not satisfied, i.e., it triggers the TGD. This variation also allows for a conditional triggering of TGDs, where new atoms are introduced, under the condition that some terms in the body coincide.

\smallskip
\textbf{Normal TGDs.}
To simplify the discussion, we consider an extension of TGDs that allow for equality predicates in the body. We will use these TGDs to rewrite standard TGDs in the following normal form.
A \emph{normal form TGD} $\rho$ is an expression of the form $\varphi(\x,\y) \wedge \eta(\x,\y) \rightarrow \exists \z\, \psi(\y,\z)$, where $\varphi$ and $\psi$ are conjunctions of atoms, $\varphi$ uses only variables and each variable in $\varphi$ occurs once in $\varphi$. The formula $\eta$ is a conjunction of equalities of the form $x=t$, where $x$ is a variable in $\x$ or $\y$, and $t$ is either a variable in $\x$ or $\y$, or a constant. The above equalities denote which variables should be considered to be the same and which positions should contain a constant. A (set of) standard TGDs $\Sigma$ can be converted in normal form in the obvious way. We denote $\norm{\Sigma}$ as the (set of) TGDs in normal form obtained from $\Sigma$.

\smallskip

In the following, fix a conditional instance $\cinst{I}$, a TGD $\rho$ with $\norm{\rho} = \varphi(\x,\y) \wedge \eta(\x,\y) \rightarrow \exists \z\, \psi(\y,\z)$, and a homomorphism $h$ from $\varphi(\x,\y)$ to $\cinst{I}[1]$. We use $h(\eta(\x,\y))$ to denote the condition obtained from $\eta(\x,\y)$ by replacing each variable $x$ therein with $h(x)$. Letting $h(\varphi(\x,\y)) = \{\alpha_1,\ldots,\alpha_n\}$, we use $\Cond{\rho,h}{\cinst{I}}$ to denote the set of all conditions of the form $h(\eta(\x,\y)) \wedge \phi_1 \wedge \cdots \wedge \phi_n$, such that $\catom{\alpha_i,\phi_i} \in \cinst{I}$, for each $i \in \{1,\ldots,n\}$.

\begin{example}\label{ex:conditions}
	Consider the TGD $\rho_3$ of Example~\ref{ex:employee}. The normal form TGD $\norm{\rho_3}$ is
	$$\EmpNew(x,y), \EmpNew(x',y'), y=y' \rightarrow \SameC(x,x').$$
	Consider now the conditional instance
	$$\cinst{I}=\{\catom{\EmpNew(\rel{\john},\rel{miami}),\nul_1 = a}, \catom{\EmpNew(\rel{mary},\nul_2),\mathsf{true}}\},$$
	where $a$ is a constant.
	Then, the mapping $h = \{x/\rel{john}, y/\rel{miami},x'/\rel{mary}, y'/\nul_2\}$ is a homomorphism from $\{\EmpNew(x,y), \EmpNew(x',y')\}$ to $\cinst{I}[1]$. Moreover, $\Cond{\rho_3,h}{\cinst{I}} = \{\nul_2 = \rel{miami} \wedge \nul_1 = a  \}$.
\end{example}

We are now ready to define the notion of conditional chase step.
In what follows, for a conditional instance $\cinst{I}$, a TGD $\rho$  with $\norm{\rho} = \varphi(\x,\y) \wedge \eta(\x,\y) \rightarrow \exists \z\, \psi(\y,\z)$ and a homomorphism $h$ from $\varphi(\x,\y)$ to $\cinst{I}[1]$, we use $\mathsf{result}(\cinst{I},\rho,h)$ to denote the set of atoms obtained from $\head(\norm{\rho})$, where each frontier variable $x$ in $\fr(\norm{\rho})$ is replaced with $h(x)$, and each existential variable $z$ in $\exvar(\norm{\rho})$ is replaced with a fresh null not occurring in $\cinst{I}$.

\begin{definition}[Conditional Chase Step]\label{def:cond-chase-step}
    Consider a conditional instance $\cinst{I}$, a TGD $\rho$, and let $\norm{\rho} = \varphi(\x,\y) \wedge \eta(\x,\y) \rightarrow \exists \z\, \psi(\y,\z)$. A \emph{conditional chase step of $\cinst{I}$ w.r.t.\ $\rho$} is an expression of the form $\cinst{I} \Gchase{\rho}{h}{\phi} \cinst{J}$,
    where \emph{(i)} $h$ is a 
	homomorphism from $\varphi(\x,\y)$ to $\cinst{I}[1]$, \emph{(ii)} $\phi \in \Cond{\rho,h}{\cinst{I}}$ is such that $\phi \not \models \false$, and \emph{(iii)} $\cinst{J} = \cinst{I} \cup \{\catom{\alpha,\phi} \mid \alpha \in \mathsf{result}(\cinst{I},\rho,h)\}$.
\end{definition}

\begin{example}
	Consider the conditional instance $\cinst{I}$, the homomorphism $h$ and the TGD $\rho_3$ of Example~\ref{ex:conditions}. Then, $\cinst{I} \Gchase{\rho_3}{h}{\phi} \cinst{J}$ is a conditional chase step, where $\phi$ is the condition
	$\nul_2 = \rel{miami} \wedge \nul_1 = a$,
	and $\cinst{J} = \cinst{I} \cup \{\catom{\SameC(\rel{john},\rel{mary}),\phi}\}$.
\end{example}

With the notion of conditional chase step at hand, we can define conditional chase sequences, which are sequences of conditional chase steps. For this we need one additional notion. A \emph{conditional tuple} is a pair $\catom{\bt,\phi}$, where $\bt$ is a tuple of constants and nulls, and $\phi$ a condition. For two conditional tuples $\catom{\bt,\phi},\catom{\bu,\psi}$, with $|\bt| = |\bu| = n$, we write $\catom{\bt,\phi}  \sqsubseteq \catom{\bu,\psi}$ if $\phi \models \psi$ and $\phi \models \bt = \bu$, where $\bt = \bu$ is a shorthand for the condition $\bigwedge^n_{i=1} \bt[i] = \bu[i]$. We write $\catom{\bt,\phi}  \not \sqsubseteq \catom{\bu,\psi}$, if $\catom{\bt,\phi}  \sqsubseteq \catom{\bu,\psi}$ does not hold.

Intuitively, $\catom{\bt,\phi}, \catom{\bu,\psi}$ should be understood to be two tuples, each of them belonging to a set of ``worlds'', described by the valuations that satisfy their conditions. Moreover, $\catom{\bt,\phi}  \sqsubseteq \catom{\bu,\psi}$ means that every world in which $\bt$ occurs, is also a world in which $\bu$ occurs ($\phi \models \psi$), and in each such world, $\bt$ and $\bu$ are the same tuples.

\begin{definition}[Conditional Chase Sequence]\label{def:cond-chase-seq}
    Consider a TGD-only data exchange setting $\Set = \des$ and a source instance $I$ of $\Set$. A \emph{conditional chase sequence of $I$ w.r.t.\ $\Set$} is a (possibly infinite) sequence of conditional instances $(\cinst{J}_i)_{i \ge 0}$, where for each $i \ge 0$,
    $\cinst{J}_i \Gchase{\rho_i}{h_i}{\phi_i} \cinst{J}_{i+1}$,
    and \emph{(i)} $\cinst{J}_0 = \{\<\alpha,\true\> \mid \alpha \in I\}$, \emph{(ii)} $\rho_i \in \STTGD \cup \TTGD$, for $i \ge 0$, and \emph{(iii)} for every $j < i$, if $\rho = \rho_i = \rho_j$, then $\catom{h_i(\fr(\rho)),\phi_i} \not \sqsubseteq \catom{h_j(\fr(\rho)),\phi_j}$.
\end{definition}

Intuitively, condition $\emph{(iii)}$ of the definition above is required to prevent the chase sequence to apply superfluous steps. That is, at a certain step, a fact is produced only if the possible worlds in which it occurs is not a subset of the possible worlds in which the same fact has already been introduced by previous steps. An example follows.

\begin{example}\label{ex:chase-sequence}
	Consider the data exchange setting $\Set = \des$, with $\Src = \{A/1,B/1\}$, $\Trg = \{R/2,S/1,T/1\}$, where the sets $\STTGD = \{\rho_1,\rho_2\}$ and $\TTGD=\{\rho_3\}$ are such that $\rho_1 = A(x) \rightarrow \exists z\, R(x,z)$, $\rho_2 = B(x) \rightarrow  S(x)$, and $\rho_3 =  R(x,y), S(y) \rightarrow T(x)$.
	Given $I = \{ A(a), B(b_1),B(b_2) \}$, the following is a conditional chase sequence of $I$ w.r.t.\ $\Set$:
	\begin{flalign*}
		&&\cinst{J}_0 & = \{\catom{A(a),\mathsf{true}},\catom{B(b_1),\mathsf{true}},\catom{B(b_2),\mathsf{true}}\}, & \cinst{J}_1 &= \cinst{J}_0 \cup \{ \catom{R(a,\nul),\mathsf{true}}\}, &&\\
		&& \cinst{J}_2 &= \cinst{J}_1 \cup \{ \catom{S(b_1),\mathsf{true}}\}, &\cinst{J}_3 &= \cinst{J}_2 \cup \{ \catom{S(b_2),\mathsf{true}}\},&& \\
		&& \cinst{J}_4 &= \cinst{J}_3 \cup \{ \catom{T(a),\nul = b_1}\}, &\cinst{J}_5 &= \cinst{J}_4 \cup \{ \catom{T(a),\nul = b_2}\}.&&
	\end{flalign*}
\end{example}

For a TGD-only setting $\Set = \des$  and a source instance $I$ of $\Set$, a \emph{finite} conditional chase sequence $(\cinst{J}_i)_{0 \le i \le n}$ of $I$ w.r.t.\ $\Set$ is \emph{maximal} if there is no conditional instance $\cinst{J}_{n+1}$, such that $(\cinst{J}_i)_{0 \le i \le n+1}$ is a conditional chase sequence of $I$ w.r.t.\ $\Set$. We call $\cinst{J}_n$ the \emph{result} of the maximal sequence.

\begin{example}\label{ex:maximal-sequence}
	Consider the conditional chase sequence $\cinst{J}_0,\ldots,\cinst{J}_5$ of Example~\ref{ex:chase-sequence}. The sequence is maximal, since any conditional chase step of the form $\cinst{J}_5 \Gchase{\rho}{h}{\phi} \cinst{J}$, for some conditional instance $\cinst{J}$, cannot satisfy condition \emph{(iii)} of Definition~\ref{def:cond-chase-seq}. The sequence $\cinst{J}_0,\ldots,\cinst{J}_4$ is not maximal because although a conditional atom of the form $\catom{T(a),\phi}$ is already present in $\cinst{J}_4$, an additional conditional atom of the same form needs to be introduced in $\cinst{J}_5$.
	This is needed to allow the fact $T(a)$ to be present for two different reasons (either because $\nul = b_1$ or $\nul = b_2$), and both reasons should occur in the result of the sequence.
\end{example}

We are now ready to present the main result of this section.
In what follows, given a schema $\Sch$ and a conditional instance $\cinst{I}$,  $\cinst{I}_{|\Sch}$ denotes the restriction of $\cinst{I}$ to its conditional facts with relations in $\Sch$.

\begin{theorem}\label{thm:chase-result}
    Consider a TGD-only setting $\Set = \des$ and a source instance $I$ of $\Set$. If $\cinst{J}$ is the result of a maximal conditional chase sequence of $I$ w.r.t.\ $\Set$, then $\cinst{J}_{|\Trg}$ is an approximate conditional solution of $I$ w.r.t.\ $\Set$.
\end{theorem}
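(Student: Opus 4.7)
The plan is to verify both clauses of Definition~\ref{def:univ-cond-sol}, using a natural correspondence between valuations of the nulls introduced by the conditional chase and \exchoice{}s witnessing supported solutions.

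For clause~1, I would fix $J \in \ssol{I,\Set}$ with witnessing \exchoice{} $\gamma$ and let $(\cinst{J}_i)_{0 \le i \le n}$ be the maximal sequence producing $\cinst{J}$, then build a valuation $\nu$ inductively along the sequence: at step $i$, the partially defined $\nu$ already assigns constants to every null occurring in $h_i(\fr(\rho_i))$, so the tuple $\bu_i := \nu(h_i(\fr(\rho_i)))$ lies in $\Const^{|\fr(\rho_i)|}$; for each fresh null $\perp_z$ introduced at step $i$ (one per $z \in \exvar(\rho_i)$), set $\nu(\perp_z) := c$ where $(z,c) \in \gamma(\rho_i, \bu_i)$. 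The induction hypothesis would be that the $\nu$-image of the $\Trg$-part of $\cinst{J}_i$, restricted to the conditional facts whose condition is satisfied by $\nu$, coincides with the portion of $J$ produced by firings of $\gamma$ up to step $i$; at $i=n$ this yields $J \in \pw{\cinst{J}_{|\Trg}}$. The ``$\subseteq$'' direction uses that $I \cup J$ satisfies $\STTGD^\gamma$ and $J$ satisfies $\TTGD^\gamma$, forcing any $\nu$-activated conditional head fact to already belong to $J$; the ``$\supseteq$'' direction combines the minimality of $J$, the maximality of the chase, and condition~(iii) of Definition~\ref{def:cond-chase-seq} to show that every coherent firing of a ground TGD in $\STTGD^\gamma \cup \TTGD^\gamma$ that produces a fact of $J$ is captured, possibly via subsumption by an earlier step, by a conditional fact activated by $\nu$.

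For clause~2, the inclusion $\ccert{\cinst{J}_{|\Trg}}{Q} \subseteq \scert{I}{Q}{\Set}$ is immediate from clause~1 by monotonicity of intersection. For the reverse inclusion when $Q$ is positive, take any $I' \in \pw{\cinst{J}_{|\Trg}}$ arising from a valuation $\nu$. I would first show that $I'$ is already a classical solution of $I$ w.r.t.\ $\Set$: every body homomorphism from a TGD $\rho \in \STTGD \cup \TTGD$ into $I \cup I'$ lifts to a chase body homomorphism whose normal-form equality condition $h(\eta)$ is satisfied by $\nu$, so the corresponding head atoms were materialised in $\cinst{J}$ and their $\nu$-images lie in $I'$. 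Applying the construction in the proof of Theorem~\ref{thm:sol-relationship} then extracts a supported solution $J^* \subseteq I'$ with $J^* \in \ssol{I,\Set}$. Since $Q$ is positive and $J^* \subseteq I'$, $Q(J^*) \subseteq Q(I')$, so $\scert{I}{Q}{\Set} \subseteq Q(J^*) \subseteq Q(I')$; quantifying over $I'$ delivers the reverse inclusion.

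The main obstacle is the bookkeeping in clause~1: condition~(iii) of Definition~\ref{def:cond-chase-seq} may block the chase from performing a step whose firing would otherwise correspond to a fact of $J$, so one must certify that the subsuming earlier step already carries the required conditional fact under a weaker condition still activated by $\nu$. A careful induction, matching each coherent firing of a ground TGD in $\STTGD^\gamma \cup \TTGD^\gamma$ to a unique non-redundant chase step, should close this gap.
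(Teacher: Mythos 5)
Your proposal is correct and follows essentially the same route as the paper's proof: both reduce the claim to showing (a) every supported solution (with its witnessing \exchoice) is a possible world of the chase result, via a valuation sending each fresh null to the constant chosen by $\gamma$, and (b) every possible world contains a supported solution, which together with monotonicity of positive queries yields both clauses of Definition~\ref{def:univ-cond-sol}. Your write-up is in fact more explicit than the paper's about the inductive valuation construction and about why condition~(iii) (subsumption of blocked steps) does not lose any fact of $J$ — a point the paper passes over silently.
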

\begin{proof}
	To prove the claim, it is enough to prove that each supported solution $J \in \ssol{I,\Set}$ is such that $I \cup J$ is a possible world of $\cinst{J}$, and that each possible world $J$ of $\cinst{J}$ contains a supported solution.
	We prove first that each $J \in \ssol{I,\Set}$ is such that $I \cup J$ is a possible world of $\cinst{J}$.
	
	Let $\gamma$ be the \exchoice\ witnessing that $J$ is a supported solution. 
	Then, $J$ can be characterized as the result of a procedure that computes a sequence $J_0, J_1, \dots J_m$ such that $J_0=I$, $J_m=J$, and each $J_i$ with $i>0$ is obtained from $J_{i-1}$ by adding the head of a TGD in $\STTGD^\gamma \cup \TTGD^\gamma$ whose body is contained in $J_{i-1}$ (i.e., the first part of the procedure in the proof of Theorem~\ref{thm:scert-conp-upper})---notice that such a procedure ensures also the minimality of $J$.
	For each step of the aforementioned procedure, there must be a corresponding conditional chase step in the sequence yielding $\cinst{J}$, which in turn induces $I \cup J$ as a possible world.
		
	Regarding whether each possible world $J$ of $\cinst{J}$ contains a supported solution, consider a possible world $J \in \pw{\cinst{J}}$. By construction of $\cinst{J}$, $J = I \cup J'$, for some instance $J'$ over $\Trg$, since all conditional facts in $\cinst{J}$, which correspond to the facts in $I$, have the always true condition.
	Moreover, by construction of $\cinst{J}$, $I \cup J'$ satisfies $\STTGD$, and $J'$ satisfies $\TTGD$. Hence, if we consider the set of TGDs $\STTGD^* \cup \TTGD^*$, where $\STTGD^*$ and $\TTGD^*$ are the sets of all ground versions of the TGDs in $\STTGD$ and $\TTGD$, respectively, we have that $I \cup J'$ satisfies $\STTGD^* \cup \TTGD^*$ and $J'$ satisfies $\TTGD^*$. However, since $\STTGD^\gamma \subseteq \STTGD^*$, and $\TTGD^\gamma \subseteq \TTGD^*$, for any \exchoice\ $\gamma$, we must have that $J'$ must contain a supported solution in $\ssol{I,\Set}$, as needed. 
\end{proof}

\begin{example}
	Consider the setting $\Set$, the source instance $I$ of $\Set$, and the conditional chase sequence $\cinst{J}_0,\ldots,\cinst{J}_5$ of Example~\ref{ex:chase-sequence}. From Theorem~\ref{thm:chase-result}, we conclude that $\cinst{J}_5$ is an approximate conditional solution for $I$ w.r.t.\ $\Set$.
\end{example}

We can further show that for TGD-only weakly-acyclic settings, a maximal conditional chase sequence always exists, and its length is polynomial. Moreover, its result can be computed in polynomial time.

\begin{theorem}\label{thm:ptime-universal}
	Consider a data exchange setting $\Set$ that is TGD-only and weakly-acyclic, and a source instance $I$ of $\Set$. Every conditional chase sequence $s = (\cinst{J}_i)_{0 \le i \le n}$ of $I$ w.r.t.\ $\Set$ is such that $n$ is a polynomial of $|I|$, and the result $\cinst{J}_n$ of $s$  can be computed in polynomial time w.r.t.\ $|I|$.
\end{theorem}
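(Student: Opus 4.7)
The plan is to adapt the proof of polynomial-time chase termination for weakly-acyclic TGDs (Theorem~3.9 of \cite{FaginKMP05}) to the conditional chase of Definitions~\ref{def:cond-chase-step} and~\ref{def:cond-chase-seq}. The adaptation is delicate because a conditional chase sequence manipulates pairs $\catom{\alpha,\phi}$, so one must control both the growth of the underlying instance $\cinst{J}_i[1]$ and the number of distinct non-subsumed conditions attached to each atom.

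First, I would establish a polynomial bound on $|\cinst{J}_n[1]|$ via the standard rank-based argument. Using the dependency graph of $\STTGD \cup \TTGD$, each position is assigned a rank, which is well-defined and finite since the setting is weakly-acyclic. By induction on rank, each position contains only polynomially many distinct values: rank-$0$ positions carry only constants from $I$ and from $\STTGD \cup \TTGD$; a rank-$(k+1)$ position may receive new nulls, but only through TGDs whose frontier is mapped into positions of rank at most $k$, which have polynomially many values by the induction hypothesis, so polynomially many new nulls are introduced at each rank. Consequently, $|\cinst{J}_n[1]|$ is polynomial in $|I|$.

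Second, I would bound the total length $n$ of any conditional chase sequence by bounding, for each TGD $\rho \in \STTGD \cup \TTGD$, the number of times $\rho$ is applied along the sequence. By condition~(iii) of Definition~\ref{def:cond-chase-seq}, this number equals the number of distinct non-subsumed pairs $\catom{h(\fr(\rho)),\phi}$ produced. The number of possible frontier tuples $h(\fr(\rho))$ is polynomial by the first step. For each fixed frontier tuple, every $\phi$ arising from $\Cond{\rho,h}{\cinst{J}_i}$ is a conjunction of equalities between terms in $\cinst{J}_n[1]$, whose atomic equalities come either from the normal-form equalities $h(\eta)$ introduced at some earlier chase step or are inherited from supporting conditional facts. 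By canonicalizing each condition modulo implication as a partition on the relevant terms, and exploiting that subsumption prunes derivations that are semantically weaker than a previously generated one, one can argue that the non-subsumed conditions for a given frontier tuple form a polynomially sized antichain, yielding a polynomial bound on $n$.

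Finally, I would verify that each conditional chase step runs in polynomial time: enumerating candidate homomorphisms $h$ from $\varphi(\x,\y)$ to $\cinst{J}_i[1]$, computing $\Cond{\rho,h}{\cinst{J}_i}$, and performing the subsumption test required by condition~(iii) are all polynomial-time tasks on data of polynomial size, so combining these facts with the bound on $n$ gives the desired polynomial-time algorithm. The principal obstacle lies in the second step: conditions are built by conjoining fresh equalities with conditions inherited from the chase history, so without care their count may blow up exponentially in the depth of derivations. Overcoming this requires a careful analysis exploiting both condition~(iii) and the semantic simplification of conditions, so that their count stays polynomial rather than exponential in the derivation depth.
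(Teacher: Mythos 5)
Your proposal follows essentially the same route as the paper: bound the length of the sequence by adapting the rank-based argument of Fagin et al.\ for weakly-acyclic TGDs, and then argue that each individual step (finding homomorphisms, checking $\phi \not\models \false$ via the closure of the equalities in $\phi$, and performing the subsumption test of condition~(iii)) costs polynomial time on data of polynomial size. The one point where you go beyond the paper is that you explicitly isolate the genuinely delicate step, namely bounding the number of non-subsumed conditions attached to a fixed frontier tuple; the paper's proof does not address this at all, simply asserting that the length bound follows by ``an argument similar to'' the classical one and then concentrating on the per-step cost. However, you do not actually close that step either: the claim that the non-subsumed conditions for a given frontier tuple ``form a polynomially sized antichain'' is asserted, not proved, and it is the crux of the whole theorem, since conditions are formed by conjoining the conditions of all supporting conditional facts and could in principle multiply across derivations (your own worry about exponential blow-up in derivation depth is well placed, and the $\sqsubseteq$ test only prunes pairs that are semantically subsumed by a single earlier pair, which need not collapse a large antichain). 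So your write-up is more candid than the paper about where the real work lies, but, like the paper, it leaves that work undone; to be a complete proof it would need an explicit argument bounding the size of such antichains, e.g., by tracing how each equality in a condition originates from a bounded number of null-introduction events.
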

\begin{proof}
	To prove that the length of a conditional chase sequence is bounded by a polynomial, it suffices to follow an argument similar to the one given in~\cite{FaginKMP05} for proving that the length of a standard chase sequence is polynomial, for weakly-acyclic settings.
		Let $s = (\cinst{J}_i)_{0 \le i \le n}$ be a conditional chase sequence of $I$ w.r.t.\ $\Set$, with $\cinst{J}_i \Gchase{\rho_i}{h_i}{\phi_i} \cinst{J}_{i+1}$, for $i \in \{0,\ldots,n-1\}$. Since $n$ is a polynomial of $|I|$, we just need to show that for each $i \in \{0,\ldots,n-1\}$, $\cinst{J}_{i+1}$ can be constructed in polynomial time.
		To this end, it suffices to focus on condition (ii) of Definition~\ref{def:cond-chase-step} and condition (iii) of Definition~\ref{def:cond-chase-seq}. Since $n$ is polynomial, the maximum number of terms occurring in each condition $\phi_i$ is polynomial. Thus, each $\phi_i$ contains at most polynomially many equalities, and we can easily check whether $\phi_i \not \models \false$, by simply computing the closure of all equalities in $\phi_i$, and checking whether an equality of the form $a = b$ can be derived, where $a,b$ are distinct constants. Similarly, for each $i \in \{0,\ldots,n-1\}$, and every $j<i$, we can check whether $\catom{h_i(\fr(\rho)),\phi_i} \not \sqsubseteq \catom{h_j(\fr(\rho)),\phi_j}$, by using a similar approach.
\end{proof}



\medskip
\noindent
\textbf{Querying Approximate Conditional Solutions.}
%
What now remains to show is how we can compute the conditional certain answers over an approximate conditional solution, e.g., obtained via the conditional chase. It is known that the problem of computing the conditional certain answers of a query $Q$ is $\co\NP\hard$ in general, even when we assume all the conditions in the given conditional instance are true~\cite{ImielinskiL84}.
Hence, given a data exchange setting $\Set$ and a source instance $I$ of $\Set$, if an approximate conditional instance $\cinst{J}$ of $I$ w.r.t.\ $\Set$ can be computed in polynomial time w.r.t.\ $|I|$, one cannot always compute $\ccert{\cinst{J}}{Q}$, in polynomial time. Hence, we require an additional step of approximation.

%

To this end, we exploit an existing algorithm presented in~\cite{GrecoMT19} to compute approximate certain answers over incomplete databases.  Here we only recall the main properties of the algorithm. For more details, we refer the reader to~\cite{GrecoMT19}.
%

%
For a query $Q$, the function $\conditional{Q}_t$ from conditional instances to sets of tuples is defined in~\cite{GrecoMT19}, and it is such that the following holds.

\begin{theorem}
	Given a conditional instance $\cinst{J}$ over some schema $\Sch$ and a query $Q$ over $\Sch$, then
	\begin{enumerate}[align=parleft,left=0pt..1em]
		\item $\conditional{Q}_t(\cinst{J}) \subseteq \ccert{\cinst{J}}{Q}$;
		\item if $Q$ is positive, $\conditional{Q}_t(\cinst{J}) = \ccert{\cinst{J}}{Q}$;
		\item if every condition in $\cinst{J}$ is a conjunction of equalities, then $\conditional{Q}_t(\cinst{J})$ is computable in polynomial time w.r.t.\ $|\cinst{J}|$.
	\end{enumerate}
\end{theorem}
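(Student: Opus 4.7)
The plan is to build on the construction of $\conditional{Q}_t$ from~\cite{GrecoMT19}, which evaluates $Q$ over $\cinst{J}$ by structural induction using a three-valued (true/false/unknown) semantics: each subquery produces a set of conditional tuples annotated with a ``truth condition'' $\phi$, and $\bt \in \conditional{Q}_t(\cinst{J})$ exactly when $\bt$ is produced with a tautological $\phi$ (i.e., satisfied by every valuation of the nulls). I would reproduce the essentials of that argument adapted to the present notation, splitting the proof into the three items.

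For item~1, soundness, I would induct on the structure of $Q$. The base case for atomic formulas follows from the semantics of conditional facts: a ground atom over $\cinst{J}$ carries a tautological truth condition precisely when it occurs in every possible world. For each connective ($\wedge$, $\vee$, $\neg$, $\exists$, $\forall$), the combination rules defining $\conditional{Q}_t$ preserve the invariant that any tuple with a tautological truth condition lies in $Q(J)$ for every $J \in \pw{\cinst{J}}$, and therefore in $\ccert{\cinst{J}}{Q}$.

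For item~2, when $Q$ is positive there is no negation, so ``unknown'' annotations never propagate and the three-valued semantics collapses to the ordinary one. Completeness then amounts to showing that whenever $\bt \in \ccert{\cinst{J}}{Q}$ there is a conditional homomorphism (or a finite union of such homomorphisms) from the body of a disjunct of $Q$ into $\cinst{J}$ whose joint condition is a tautology. This follows by the usual UCQ argument: any possible world witnessing $\bt \in Q(J)$ selects such a homomorphism, and the inability of any valuation to avoid all of them forces the disjunction of their conditions to be tautological. For item~3, if every condition in $\cinst{J}$ is a conjunction of equalities, then every condition constructed by $\conditional{Q}_t$ lies in a fragment where satisfiability and entailment reduce to a union--find computation on equivalence classes of terms, hence are polynomial. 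With $Q$ fixed, only polynomially many conditional tuples are produced during evaluation, each requiring polynomially many such checks, so the total cost stays polynomial in $|\cinst{J}|$.

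The main obstacle will be item~2: showing that whenever a tuple is missed by $\conditional{Q}_t$ one can exhibit a valuation—hence a possible world—in which $Q$ fails on it. The subtlety is that several partial witnesses may jointly cover all valuations without any single truth condition being a tautology, so the combination rules of $\conditional{Q}_t$ must be rich enough to synthesize a tautological condition whenever the union of the witness conditions is tautological. Verifying that the construction of~\cite{GrecoMT19} does precisely this for positive queries, and that it still degrades gracefully (to a sound-only approximation) for queries with negation, is where most of the technical work sits; the polynomial bound in item~3 is then a corollary of the syntactic form of the conditions produced.
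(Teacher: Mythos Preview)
The paper does not actually prove this theorem: it is stated as a property of the function $\conditional{Q}_t$ that is \emph{defined in}~\cite{GrecoMT19}, and the paper explicitly says ``we refer the reader to~\cite{GrecoMT19}'' for the details. There is therefore no proof in the paper to compare your proposal against; the result is simply imported from the cited work.

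Your proposal goes further than the paper by attempting to reconstruct the argument behind~\cite{GrecoMT19}. The high-level shape you describe---a three-valued evaluation by structural induction on $Q$, soundness by induction on the connectives, and a union--find argument for the complexity bound when conditions are conjunctions of equalities---is broadly in line with how such approximation algorithms are typically justified. That said, your treatment of item~2 is the shakiest part: you correctly flag that the difficulty is showing $\conditional{Q}_t$ can synthesize a tautological condition from a covering family of non-tautological ones, but you do not actually resolve it, and whether this works depends entirely on the specific combination rules of~\cite{GrecoMT19}, which you have not spelled out. For the purposes of the present paper, none of this reconstruction is needed; a citation suffices.
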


The theorem above implies that the approximation algorithm provides so-called \emph{correctness guarantees} (Item 1 of the theorem), i.e., the algorithm always constructs a subset of the conditional certain answers, and thus, only returns correct answers. This is the standard notion for measuring the quality of the set of approximate answers these algorithms are able to compute, in the context of querying incomplete databases---e.g., see~\cite{Libkin16,GuagliardoL16,ConsoleGL16}.
To the best of our knowledge, none of the existing approximation algorithms from the literature provide other kinds of theoretical guarantees, e.g., w.r.t.\ to ``how complete'' the set of approximate answers is.
%

From the result above, Theorem~\ref{thm:ptime-universal}, and Definition~\ref{def:cond-chase-seq}, we obtain the following crucial result.

\begin{corollary}
	Consider a TGD-only weakly-acyclic setting $\Set$. For every source instance $I$ of $\Set$, an approximate conditional solution $\cinst{J}$ of $I$ w.r.t.\ $\Set$ can be constructed in polynomial time, and for every query $Q$, $\conditional{Q}_t$ is such that
	\begin{enumerate}[align=parleft,left=0pt..1em]
		\item $\conditional{Q}_t(\cinst{J}) \subseteq \ccert{\cinst{J}}{Q} \subseteq \scert{I}{Q}{\Set}$;
		\item if $Q$ is positive,  $\conditional{Q}_t(\cinst{J})$ = $\ccert{\cinst{J}}{Q}$ = $\scert{I}{Q}{\Set}$;
		\item $\conditional{Q}_t(\cinst{J})$ is computable in polynomial time w.r.t.\ $|\cinst{J}|$.
	\end{enumerate}
	
\end{corollary}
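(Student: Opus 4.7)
The plan is to chain together the three already-established results immediately preceding the corollary. First, by Theorem~\ref{thm:ptime-universal}, any conditional chase sequence of $I$ w.r.t.\ $\Set$ has polynomially-bounded length and its result can be constructed in polynomial time; in particular, a maximal one exists. I would take $\cinst{J}$ to be the restriction to $\Trg$ of the result of such a maximal sequence; Theorem~\ref{thm:chase-result} then certifies that $\cinst{J}$ is an approximate conditional solution of $I$ w.r.t.\ $\Set$. This settles the polynomial-time constructibility claim of the corollary, and, via Definition~\ref{def:univ-cond-sol}, also supplies the inclusion $\ccert{\cinst{J}}{Q} \subseteq \scert{I}{Q}{\Set}$ appearing in Item~1 together with the corresponding equality for positive $Q$ appearing in Item~2.

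Next, to obtain the remaining inclusion $\conditional{Q}_t(\cinst{J}) \subseteq \ccert{\cinst{J}}{Q}$ in Item~1 and the remaining equality $\conditional{Q}_t(\cinst{J}) = \ccert{\cinst{J}}{Q}$ in Item~2, I would directly invoke Items~1 and~2 of the preceding theorem on $\conditional{Q}_t$.

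The only step requiring genuine verification, and the main obstacle, is Item~3: the preceding theorem guarantees polynomial-time computability of $\conditional{Q}_t(\cinst{J})$ only under the hypothesis that every condition appearing in $\cinst{J}$ is a conjunction of equalities. I would establish this invariant by induction on the length of the chase sequence producing $\cinst{J}$. The base case holds because $\cinst{J}_0 = \{\<\alpha,\true\> \mid \alpha \in I\}$ has all conditions equal to $\true$, which may be treated as the empty conjunction (or, if one prefers a literal equality, replaced by a trivial $c = c$). For the inductive step, Definition~\ref{def:cond-chase-step} together with the definition of $\Cond{\rho,h}{\cinst{I}}$ shows that every condition introduced by a chase step has the form $h(\eta(\x,\y)) \wedge \phi_1 \wedge \cdots \wedge \phi_n$, where $\eta$ is the conjunction of equalities produced by the normal form of $\rho$ and each $\phi_i$ is, by the inductive hypothesis, already a conjunction of equalities; hence so is the new condition. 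With the hypothesis of Item~3 of the $\conditional{Q}_t$ theorem now available, that item applies and yields the claim.

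I do not anticipate any serious difficulty beyond this syntactic check: the heavy lifting has already been done in Theorems~\ref{thm:chase-result} and~\ref{thm:ptime-universal}, and the only subtle point is confirming that the conditional chase never introduces a condition outside the conjunction-of-equalities fragment, so that the tractability clause of the preceding theorem on $\conditional{Q}_t$ can be applied to the computed $\cinst{J}$.
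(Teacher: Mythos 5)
Your proposal is correct and matches the paper's own (very terse) justification, which simply chains the $\conditional{Q}_t$ theorem, Theorem~\ref{thm:ptime-universal}, Theorem~\ref{thm:chase-result}, and the chase definitions in exactly the way you describe. Your explicit inductive check that every condition produced by the conditional chase is a conjunction of equalities (with $\true$ read as the empty conjunction) is precisely the content the paper leaves implicit when it cites Definition~\ref{def:cond-chase-seq}, so nothing is missing.
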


%
%

\section{Dealing with EGDs}\label{sec:egds}
We now show how to deal with weakly-acyclic settings with EGDs, when it comes to construct approximate conditional solutions. 

Consider a weakly-acyclic data exchange setting $\Set = \des$ and a source instance $I$. We assume that $\ssol{I,\Set} \neq \emptyset$. Checking whether $\ssol{I,\Set} = \emptyset$ is feasible in polynomial time, for weakly-acyclic settings (Theorem~\ref{thm:exsol-wa}), and if $\ssol{I,\Set}$ is empty, no approximate conditional solution can be constructed. 

The goal is to first construct an approximate conditional solution $\cinst{J}$ for the data exchange setting $\Set^\exists$ obtained from $\Set$ by removing the set $\Sigma_E$ of all EGDs from $\TTGD$.
Then, we show that for every query $Q$, the EGDs in $\Sigma_E$ can be embedded in $Q$, obtaining a query $Q'$, in such a way that
$$\ccert{\cinst{J}}{Q'} = \bigcap\limits_{J \in \pw{\cinst{J}}\text{ and } J \text{ satisfies } \Sigma_E} Q(J).$$
As we will see, this will imply that $\ccert{\cinst{J}}{Q'} \subseteq \scert{I}{Q}{\Set}$. 

Thus, modulo a rewriting of $Q$, we can exploit $\cinst{J}$ to compute an approximation of the supported certain answers of $Q$. 
Despite our efforts, we were not able to prove that $Q'$ is also such that $\ccert{\cinst{J}}{Q'} = \scert{I}{Q}{\Set}$, when $Q$ is positive. It is an open question that we hope to answer in a future work.

In what follows, for a data exchange setting $\Set$, $\Set^{\exists}$ denotes the setting obtained from $\Set$ by removing the set $\Sigma_E$ of all EGDs from $\TTGD$.

\begin{lemma}\label{lem:qrew}
	Consider a weakly-acyclic data exchange setting $\Set = \des$, and assume $I$ is a source instance of $\Set$ such that $\ssol{I,\Set} \neq \emptyset$. Moreover, let $\cinst{J}$ be an approximate conditional solution of $I$ w.r.t.\ $\Set^\exists$. Then, for every query $Q$, there exists a query $Q'$, which depends only on $Q$ and the set of EGDs $\Sigma_E$ in $\Set$, such that
	\begin{flalign*}
		\ccert{\cinst{J}}{Q'} =\bigcap\limits_{J \in \pw{\cinst{J}}\text{ and } J \text{ satisfies } \Sigma_E} Q(J).
	\end{flalign*}
\end{lemma}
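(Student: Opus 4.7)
The plan is to embed the EGD-check directly into the query by relativizing $Q$ with a Boolean FO sentence capturing $\Sigma_E$-satisfaction. For every EGD $\eta \in \Sigma_E$ of the form $\varphi_\eta(\x) \to x_\eta = y_\eta$, I let $\Psi_\eta$ be the Boolean sentence $\forall \x\, (\varphi_\eta(\x) \to x_\eta = y_\eta)$, and set $\Psi = \bigwedge_{\eta \in \Sigma_E} \Psi_\eta$. I then define $Q'(\bx) := Q(\bx) \vee \neg \Psi$. By construction $Q'$ depends only on $Q$ and $\Sigma_E$. For every possible world $J \in \pw{\cinst{J}}$: if $J \models \Sigma_E$ then $J \models \Psi$ and $Q'(J) = Q(J)$; if $J \not\models \Sigma_E$ then $\neg \Psi$ holds in $J$ and, under active-domain semantics, $Q'(J) = \adom(J)^{|\bx|}$.

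The $\subseteq$ direction of the desired equality is immediate: if $\bt \in \ccert{\cinst{J}}{Q'}$ then $\bt \in Q'(J)$ for every $J \in \pw{\cinst{J}}$, and in particular $\bt \in Q'(J) = Q(J)$ whenever $J \models \Sigma_E$, placing $\bt$ in the right-hand side intersection.

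For the $\supseteq$ direction, take $\bt$ in the right-hand side. Whenever $J \models \Sigma_E$, $\bt \in Q(J) = Q'(J)$ by hypothesis. For $J \not\models \Sigma_E$, $Q'(J) = \adom(J)^{|\bx|}$, so it suffices to show that every component of $\bt$ lies in $\adom(J)$. This reduces to a stability property: every constant occurring in some tuple of the right-hand side already appears in the active domain of every possible world of $\cinst{J}$. I would establish this by tracing the provenance of constants through the conditional chase of Theorem~\ref{thm:chase-result}: every constant occurring in a target-schema fact of $\cinst{J}$ is inherited from $I$ through a source-to-target TGD application, whose associated condition is $\true$, and hence occurs in at least one $\true$-conditional target fact and therefore in every world's active domain. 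To rule out the possibility that a right-hand side tuple involves a ``volatile'' constant---one that $\adom(J)$ acquires only for specific null valuations---I would argue that one can always modify the valuation of a null unconstrained by $\Sigma_E$ to produce another $\Sigma_E$-satisfying world in which the volatile value differs, contradicting the tuple's membership in the full intersection.

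The principal obstacle is precisely this stability argument. It requires a careful provenance analysis of the conditional chase, distinguishing constants baked into conditional facts at chase time from those introduced through valuation choices on nulls, and it leverages the hypothesis $\ssol{I,\Set} \neq \emptyset$---which guarantees that the right-hand side intersection ranges over a non-empty family of $\Sigma_E$-satisfying worlds---together with the freedom of varying valuations on nulls that are unrestricted by $\Sigma_E$. Once stability is established, combining it with the case analysis on $J \models \Sigma_E$ versus $J \not\models \Sigma_E$ yields the two inclusions and completes the proof.
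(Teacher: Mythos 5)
Your construction of $Q'$ is, up to cosmetics, exactly the paper's: the paper sets $Q' = Q_1 \vee Q_2$ with $Q_1 = Q \wedge \bigwedge_{\eta\in\Sigma_E} Q_\eta$ and $Q_2 = Q_{\mathsf{dom}} \wedge \bigvee_{\eta\in\Sigma_E} Q^{\neg}_\eta$, which evaluates to $Q(J)$ on worlds satisfying $\Sigma_E$ and to a full set of $k$-tuples on the others---your $Q \vee \neg\Psi$ with an extra domain predicate. Where you diverge is in what happens after the case analysis: the paper simply asserts that ``the claim will follow immediately'' and stops, whereas you correctly observe that the $\supseteq$ direction still needs an active-domain stability property, namely that every component of a tuple in $\bigcap_{J \in \pw{\cinst{J}},\, J \models \Sigma_E} Q(J)$ lies in $\adom(J)$ for \emph{every} possible world $J$, including those violating $\Sigma_E$. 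This is a real subtlety that the paper's proof does not discharge; its $Q_{\mathsf{dom}}$ padding does not resolve it either, since under the paper's active-domain semantics any query output over $J$ is contained in $\adom(J)^k$ anyway, and a tuple in the right-hand intersection may involve a source constant mentioned in neither $\Set$ nor $Q$. Your provenance-plus-perturbation sketch is the right way to close this for chase-produced conditional instances (conditions are conjunctions of equalities, and every constant written into a conditional fact is inherited from a parent fact whose condition is entailed by the new fact's condition); but note that the lemma as stated quantifies over an \emph{arbitrary} approximate conditional solution, so to make the argument airtight you would either have to restrict to the output of Theorem~\ref{thm:chase-result} (which is how the lemma is actually used) or add the stability property as an explicit hypothesis. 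In short: same approach as the paper, and the step you honestly leave open is precisely the step the paper's own proof omits, so you have not missed anything the paper supplies.
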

\begin{proof} Let $k = \ar(Q)$. The goal is to construct, for a given query $Q$, a query $Q'$ such that, for every target instance $J$, whenever all the EGDs in $\Sigma_E$ are satisfied by $J$, then $Q'(J) = Q(J)$, and $Q'(J) = \mathcal{C}^k$ otherwise, where $\mathcal{C}$ is the set of all constants occurring in $J$, $\Set$, and $Q$. That is, if $J$ does not satisfy $\Sigma_E$, the query $Q'$ outputs every possible tuple of length $k$, using constants from $J$, $\Set$ and $Q$.  Clearly, if $Q'$ enjoys the above property, the claim will follow immediately.
	We now explain how the query $Q'$ can be constructed, starting from $Q$ and $\Sigma_E$. 
	The query $Q'$ is made of two subqueries, that are put together via a union. That is:
	$$Q' = Q_1 \vee Q_2.$$
	$Q_1$ is such that for every target instance $J$, if $J$ satisfies $\Sigma_E$, then $Q_1(J) = Q(J)$, and $Q_1(J) = \emptyset$, otherwise.
	On the other hand, $Q_2$ is such that for every target instance $J$, if $J$ satisfies $\Sigma_E$, then $Q_2(J) = \emptyset$, and $Q_2(J) = \mathcal{C}^{k}$, otherwise. It remains to show how $Q_1$ and $Q_2$ are constructed. 
	For each EGD $\eta \in \Sigma_E$, we let $Q_{\eta}$ be the boolean query such that for every instance $J$ over $\Trg$, $Q_\eta(J) = \{()\}$, if $\eta$ satisfies $J$, and $Q_\eta(J) = \emptyset$, otherwise.
	Furthermore, we use $Q^{\neg}_{\eta}$ to denote the complement of $Q_{\eta}$, that is $Q^\neg_\eta(J) = \{()\}$ iff $Q_\eta(J) = \emptyset$.
	All the above queries can be easily written in FO.
	Finally, we let $Q_{\mathsf{dom}}$ be the query of arity $k$, such that, for every target instance $J$, $Q_{\mathsf{dom}}(J)$ is the set of all tuples of length $k$ over the constants in $J$, $\Set$ and $Q$. The above query can be encoded with a UCQ.
	Then, we have
	$$ Q_1(x_1,\ldots,x_k) = Q(x_1,\ldots,x_k) \wedge \bigwedge\limits_{\eta \in \Sigma_E} Q_\eta,$$
	and
	$$ Q_2(x_1,\ldots,x_k) = Q_{\mathsf{dom}}(x_1,\ldots,x_k) \wedge \bigvee\limits_{\eta \in \Sigma_E} Q^\neg_\eta.$$
	By construction, $Q_1(J) = Q(J)$ if $J$ satisfies $\Sigma_E$ and $Q_1(J) = \emptyset$, otherwise, and $Q_2(J) = \emptyset$, if $J$ satisfies $\Sigma_E$, and $Q_2(J) = \mathcal{C}^{k}$, otherwise.
\end{proof}

From the result above, and from the fact that the supported solutions of a data exchange setting $\Set$ correspond to the supported solutions of $\Set^\exists$ that also satisfy the EGDs of $\Set$, we obtain the main result of this section.

\begin{theorem}
	Consider a weakly-acyclic data exchange setting $\Set = \des$, and assume $I$ is a source instance of $\Set$ such that $\ssol{I,\Set} \neq \emptyset$. Moreover, let $\cinst{J}$ be an approximate conditional solution of $I$ w.r.t.\ $\Set^\exists$. Then, for every query $Q$, there exists a query $Q'$, which depends only on $Q$ and the set of EGDs $\Sigma_E$ in $\Set$, such that
	\begin{flalign*}
		&& \ccert{\cinst{J}}{Q'} \subseteq \scert{I}{Q}{\Set}. && 
	\end{flalign*}
\end{theorem}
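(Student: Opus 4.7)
The approach is to combine Lemma~\ref{lem:qrew} with the observation that every supported solution of $\Set$ is also a supported solution of $\Set^\exists$ and, of course, satisfies $\Sigma_E$ by definition. Concretely, I would first apply Lemma~\ref{lem:qrew} to obtain a query $Q'$, depending only on $Q$ and $\Sigma_E$, such that
$$\ccert{\cinst{J}}{Q'} \;=\; \bigcap_{J' \in \pw{\cinst{J}},\; J' \models \Sigma_E} Q(J').$$
To derive the stated inclusion, it then suffices to argue that $\ssol{I,\Set}$ is a subset of the index set on the right-hand side, so that the intersection on the right ranges over a superset of the family that defines $\scert{I}{Q}{\Set}=\bigcap_{J^* \in \ssol{I,\Set}} Q(J^*)$.

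To show $\ssol{I,\Set} \subseteq \{J' \in \pw{\cinst{J}} : J' \models \Sigma_E\}$, the condition $J^* \models \Sigma_E$ is immediate from the definition of supported solution. The nontrivial step is $\ssol{I,\Set} \subseteq \pw{\cinst{J}}$. I would factor this through the intermediate inclusion $\ssol{I,\Set} \subseteq \ssol{I,\Set^\exists}$, and then invoke item~(1) of Definition~\ref{def:univ-cond-sol}, which yields $\ssol{I,\Set^\exists} \subseteq \pw{\cinst{J}}$ since $\cinst{J}$ is an approximate conditional solution of $I$ w.r.t.\ $\Set^\exists$.

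The heart of the argument is the sub-claim $\ssol{I,\Set} \subseteq \ssol{I,\Set^\exists}$. Given $J \in \ssol{I,\Set}$ with witnessing \exchoice\ $\gamma$, the conditions $I \cup J \models \STTGD^\gamma$ and $J \models (\TTGD \setminus \Sigma_E)^\gamma$ transfer trivially, so only minimality w.r.t.\ $\Set^\exists$ requires justification. Suppose for contradiction that some $J' \subsetneq J$ satisfies $\STTGD^\gamma$ together with $I$ and satisfies $(\TTGD \setminus \Sigma_E)^\gamma$. The key observation is that EGD satisfaction is downward-closed under taking subsets: any homomorphism from an EGD body into $J'$ is also a homomorphism into $J$, so the equalities forced by $\Sigma_E$ on $J$ are inherited by $J'$. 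Hence $J'$ satisfies the full set $\TTGD^\gamma$, contradicting the minimality of $J$ as a supported solution of $\Set$.

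I do not anticipate any serious obstacle. The decisive technical ingredient is the downward-closure of EGD satisfaction, after which the argument is a direct chaining of inclusions, $\ssol{I,\Set} \subseteq \ssol{I,\Set^\exists} \subseteq \pw{\cinst{J}}$, combined with the EGD-aware query rewriting supplied by Lemma~\ref{lem:qrew}.
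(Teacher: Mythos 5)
Your proposal is correct and follows essentially the same route as the paper: apply Lemma~\ref{lem:qrew} and then chain $\ssol{I,\Set} \subseteq \ssol{I,\Set^\exists} \subseteq \pw{\cinst{J}}$ together with the fact that supported solutions of $\Set$ satisfy $\Sigma_E$. The only difference is that you explicitly prove the inclusion $\ssol{I,\Set} \subseteq \ssol{I,\Set^\exists}$ (via downward-closure of EGD satisfaction preserving minimality), which the paper merely asserts as "supported solutions of $\Set$ correspond to supported solutions of $\Set^\exists$ that also satisfy the EGDs"; this is a welcome extra level of detail, not a different argument.
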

\begin{proof}
From Lemma~\ref{lem:qrew}, there exists a query $Q'$, depending only on $Q$ and $\Sigma_E$, such that
\begin{equation}\label{eq:eq}
	\ccert{\cinst{J}}{Q'} =\bigcap\limits_{J \in \pw{\cinst{J}}\text{ and } J \text{ satisfies } \Sigma_E} Q(J).
\end{equation}
From the definition of approximate conditional solution, we have that $\ssol{I,\Set^\exists} \subseteq \pw{\cinst{J}}$. Moreover, by definition of supported solution, $\ssol{I,\Set} = \{J \in \ssol{I,\Set} \mid J \text{ satisfies } \Sigma_E\}$. Hence, $\ssol{I,\Set} \subseteq \{ J \in \pw{\cinst{J}} \mid J \text{ satisfies } \Sigma_E\}$. The latter inclusion and equation~\ref{eq:eq} let us conclude that
$\ccert{\cinst{J}}{Q'} \subseteq \scert{I}{Q}{\Set}$.
\end{proof}

The above results tell us that we can still materialize a target instance, even for weakly-acyclic settings that allow for EGDs. Moreover, modulo a rewriting of the query $Q$, the constructed target instance allows for the construction of a subset of supported certain answers of $Q$.


\section{Connections with Other Work and Next Steps}
Conditional instances and, more in general, incomplete databases, have already been employed in the context of data exchange. However, in most of previous work, incomplete databases are used to encode source and target instances with incomplete information.
In~\cite{Arenas0R13}, the authors extend the standard data exchange framework by allowing source and target instances to be incomplete databases, encoded via some representation system, such as conditional instances. There, the main goal is to study the semantics of data exchange under the assumption that the source and target instances can be incomplete.
In contrast, in our work, we focus on the classical data exchange setting, where source and target instances are standard (complete) databases. Here we employ incomplete databases, in particular conditional instances, only as a \emph{tool} to compute the (approximate) certain answers of a query over our set of supported solutions, which are standard databases as well. 
Adapting our notion of supported solution to the setting of data exchange with incomplete instances is a non-trivial task which we will consider for future work.

In Section~\ref{sec:univ-solutions}, we have seen how a conditional extension of the chase procedure, working on a normalized form of TGDs, can be employed to compute in polynomial time, for weakly-acyclic settings, an approximate conditional solution.
A similar normal form to the one we employ in our paper is presented in~\cite{GheerbrantS19}. However, in that work, the normal form is applied to \emph{queries}, and the goal is to compute so-called best answers of UCQs over incomplete databases, while in our case, we employ a normal form for \emph{TGDs}, which we then use to simplify the definition of the conditional chase.
Finally, the idea of extending the chase procedure with conditional TGD applications is not new and has been explored in previous work. In particular, the work of~\cite{GrahneO11} introduces a conditional version of the chase procedure which is similar to ours. The main difference is that the conditional chase of~\cite{GrahneO11} is much simpler, since it is an extension of the simplest variant of the chase algorithm, called oblivious chase, while ours can be seen as an extension of the more refined semi-oblivious (a.k.a.\ skolem) chase (see., e.g.,~\cite{CalauttiGP15,GrOn18,CalauttiP19,CalauttiP21,CalauttiGP22} for more details). For this reason, it is not difficult to show that when considering weakly-acyclic settings, the conditional chase of~\cite{GrahneO11} is not guaranteed to terminate, while termination for weakly-acyclic settings is a crucial property for our purposes, since we need to be able to construct a finite conditional instance in this case.

The problem of dealing with non-monotonic queries has been investigated beyond data exchange, as for example for Ontology-Mediated Query Answering (OMQA). In this setting, we are given an instance (database) $D$, an ontology $\Sigma$ encoded in some logical formalism (e.g., via TGDs), and a query $Q(\x)$, and the goal is to compute all the certain answers of $Q(\x)$ w.r.t.\ $D$ and $\Sigma$, i.e., the tuples that are answers to $Q$ in \emph{every model} of the logical theory $D \cup \Sigma$. A relevant work in this scenario is the one in~\cite{Calvanese07}, where the authors define the query language EQL-Lite($\cal Q$), parametrized with a standard (positive) query language $\cal Q$ (e.g., UCQs), and supports a limited form of negation. In particular, an expression $\psi$ in EQL-Lite($\cal Q$) is of the form
$\psi := \mathbb{K}\, \rho \mid \psi_1 \wedge \psi_2 \mid \neg \psi_1 \mid \exists x\, \psi_1$,
where $\rho \in \cal Q$, and $\psi_1,\psi_2$ are EQL-Lite($\cal Q$) expressions. 

Here, the epistemic operator $\mathbb{K}\,$ is applied to expressions $\rho \in \cal Q$ and returns the certain answers of $\rho$ w.r.t.\ the input database $D$ and the ontology $\Sigma$. The main instantiation of EQL-Lite that the authors study is EQL-Lite(UCQ), i.e., where $\cal Q$ coincides with the set of all UCQs.

From the above definition, we observe that negation is applied only to (a combination of) the certain answers of positive queries. This gives a semantics to negation that fundamentally differs from ours, as illustrated in the following example.

Consider the data exchange setting $\Set = \des$, where $\Src$ stores employees of a company in the unary relation $\Emp$.
The target schema $\Trg$ contains a unary relation $\Emp'$ storing employees, the ternary relation $\Address$ assigning to each employee her work and home address, and the unary relation $\WorkFromHome$, storing employees working from home. Assume we have $\STTGD = \{\rho_1 = \Emp(x) \rightarrow \Emp'(x),\rho_2=\Emp(x) \rightarrow \exists z\, \exists w\, \Address(x,z,w)\}$ and $\TTGD = \{\rho_3=\Address(x,y,y) \rightarrow \WorkFromHome(x)\}$.

	The above setting copies employees from the source to the target via the TGD $\rho_1$, while the TGD $\rho_2$ states that each employee must have a work and home address, denoted via the existential variables $z$ and $w$, respectively. Finally, the TGD $\rho_3$ states that if the work and home address of an employee coincide, then this employee works from home.
	
	Assume the source instance is $I = \{\Emp(\rel{john})\}$,
	and let $Q$ be the query asking for all employees who do not work from home, i.e., $Q(x) = \Emp'(x) \wedge \neg \WorkFromHome(x)$.
	
	According to~\cite{Calvanese07}, the query $Q$ corresponds to the EQL-Lite(UCQ) expression
	$Q'(x) = \mathbb{K}\, \Emp'(x) \wedge \neg \mathbb{K}\, \WorkFromHome(x)$.
	Letting $D = I$, and $\Sigma = \STTGD \cup \TTGD$, roughly, the above means that an empolyee is an answer to the query $Q'$ if she is present in \emph{all} models of $D \cup \Sigma$ and such that there is \emph{at least one model} in which the employee does not work from home. Under this interpretation, the answer to $Q'$ is $\rel{john}$. However, under our semantics, the answer to $Q$ is empty. Hence, the fundamental difference is that negation, under EQL-Lite, is interpreted as negating classical certain answering, and thus an expression $\neg \mathbb{K}\, \psi$ is ``satisfied''  when at least one model/solution does not entail $\psi$, while in our case, we consider the given query as a whole, and require it to be satisfied in \emph{every} valid solution.

We conclude by discussing avenues for further research. First, we would like to extend the conditional chase to weakly-acyclic settings with EGDs, and identify relevant data exchange settings for which computing the supported certain answers is tractable.
Moreover, we would like to identify other quality measures of our approximation algorithm using techniques such as the ones introduced in~\cite{Libkin18}. We also plan to experimentally evaluate both our translation to logic programs for computing exact answers, as well as our materialization-based approaches for computing approximate answers by means of a dedicated benchmark, as done e.g., in the context of approximate consistent query answering~\cite{CalauttiCP21}.

To conclude, we mention that explaining query answering has recently drawn considerable attention under existential rule languages (e.g., see \cite{ijcai22,CeylanLMMV21,CeylanLMMV20,LukasiewiczMM20,CeylanLMV19}), and knowledge representation in general (e.g., in the context of argumentation~\cite{AlfanoCGPT20}). Hence, an interesting direction for future work is to address such issues in our setting. Also, it would be interesting to account for user preferences when answering queries, as recently done in \cite{aij22b} for ontology-mediated queries.

%
%
%

\bibliographystyle{acmtrans}
\bibliography{refs}

\end{document}